\title{Drawing Reeb Graphs}
\author{Erin Chambers}
    {University of Notre Dame, Notre Dame, IN, USA}
    {echambe2@nd.edu}%
    {https://orcid.org/0000-0001-8333-3676}%
    {NFS CCF 2444309}
\author{Brittany Terese Fasy}%
    {Montana State University, Bozeman, MT, USA \and \url{http://www.fasy.us} }%
    {brittany.fasy@montana.edu}%
    {https://orcid.org/0000-0003-1908-0154}%
    {NSF CCF 2046730}
\author{Erfan {Hosseini Sereshgi}}
    {Tulane University, New Orleans, LA, USA}
    {shosseinisereshgi@tulane.edu}%
    {https://orcid.org/0000-0003-2548-7428}%
    {NSF CCF 2107434}
\author{Maarten L\"offler}
    {Utrecht University, the Netherlands}
    {m.loffler@uu.nl}%
    {https://orcid.org/0009-0001-9403-8856}%
    {}
\authorrunning{Chambers, Fasy, Hosseini Sereshgi, and L\"offler} 
\keywords{Dummy keyword} 
\begin{document}

\maketitle              

\begin{abstract}
    Reeb graphs are simple topological descriptors with applications in many
areas like topological data analysis and computational geometry. Despite their
prevalence, visualization of Reeb graphs has received less attention. In this
paper, we bridge an essential gap in the literature by exploring the complexity
of drawing Reeb graphs. Specifically, we demonstrate that Reeb graph crossing
number minimization is NP-hard, both for straight-lined and curved edges. On the other hand, we identify specific classes of Reeb graphs, namely
paths and caterpillars, for which crossing-free drawings exist. We also give an
optimal algorithm for drawing cycle-shaped Reeb graphs with the least number of
crossings and provide initial observations on the complexities of drawing
multi-cycle Reeb graphs. We hope that this work establishes the foundation for
an understanding of the graph drawing challenges inherent in Reeb graph
visualization and paves the way for future work in this area.

    \keywords{Reeb graphs, \and graph drawing, \and minimum crossing.}
\end{abstract}

\section{Introduction}\label{sec:intro}
Reeb graphs have become an important tool in computational topology for
visualizing continuous functions on complex spaces as a simplified
discrete structure.  Essentially, these graphs capture how level sets of a
function from a topological space to the real numbers
evolve and connect. In a Reeb graph, the connected components of each level
set become points, forming a graph, where the vertices are found at heights
where topological changes occur and edges represent the evolution of a single
connected component; see \figref{reeb-example}.
\begin{figure}
    \centering
    \includegraphics[height=2.5in]{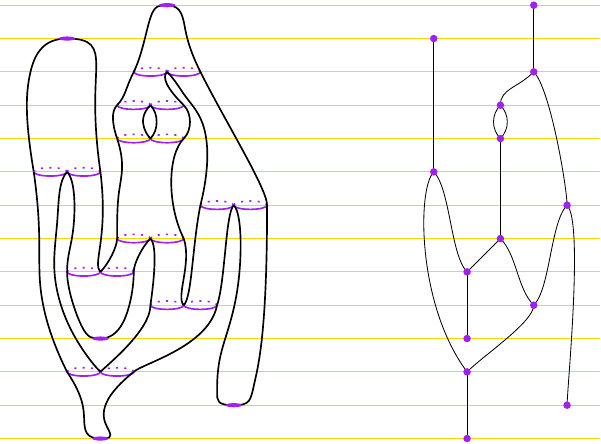}
    \caption{A two-manifold and its Reeb graph. }\label{fig:reeb-example}
\end{figure}
Originally introduced in~\cite{greeb}, recent work on computing Reeb graph
efficiently~\cite{Harvey2010,Parsa2012} has led to their increasing use in areas
such as graphics, visualization, and shape analysis~\cite{Biasotti2008,Yan2021}.

Despite how prevalent the use of Reeb graphs is, surprisingly little work has
been done on generating nice drawings of these structures.  The only work in
this area we are aware of considers book embeddings of these
graphs~\cite{Kurlin2013b}; while of combinatorial interest, the algorithms seem
less practical for easy viewing of larger Reeb graphs.  So, this leads to a very
natural question, both theoretical and practical: how difficult is it to draw
Reeb graph with minimal crossing?

Given the structural properties of Reeb graphs, there is an obvious connection
to level drawings of graphs, because Reeb functions are simply real-valued labels on
the vertices.
{\em Layered graph drawing}, also known as {\em hierarchical} or {\em
Sugiyama-style} graph drawing
is a type of graph drawing in which the vertices
of a directed graph are drawn in horizontal rows, or {\em layers}, an in which
edges are $y$-monotone curves~\cite{Bastert2001,bett98,sugiyama81}; depending on
the application, vertices may or may not be pre-assigned to levels.  As with
many drawing styles, layered drawings tend to be more readable when the number
of crossings is low; hence a significant amount of effort has been directed
towards crossing minimization in layered drawings of graphs. When a drawing
without any crossings is possible, the graph is called {\em upward planar} (when
vertices are not pre-assigned to levels) or
{\em level planar} (when they are).
Other work characterizes forbidden structures in level planar graphs, concluding
that there are infinitely many such forbidden minors in this class of level
planar graphs~\cite {EstrellaBalderrama2009OnTC}.
While it is possible to test whether a given directed
acyclic graph with a single source and a single sink admits an upward planar
drawing~\cite {garg95}, the problem becomes NP-hard when there can be multiple
sources or sinks~\cite{doi:10.1137/S0097539794277123}.  When the layers are
preassigned, the problem is easier, and testing whether a graph admits a {\em
level planar} embedding is possible in linear time~\cite
{10.1007/3-540-46648-7_7}. Recent work points out that several recent algorithms
for this are incorrect~\cite
{fprs-lpmdwt-24}; even in this
setting, the details can be very tricky and the exact status in unclear.
 However, when a planar embedding is not possible, the
problem of minimizing crossings is still of interest, and this problem is
NP-hard, even when there are only two levels~\cite {garey-johnson}.

Building on our previous poster paper~\cite{chambers2023drawing}, this work
initiates the study of crossing minimization in drawings of Reeb graphs, working
towards practical solutions to
the key problem of drawing these essential topological structures
effectively. In particular, we demonstrate the inherent complexity of this task
by proving that minimizing edge crossings is NP-hard. However, we make promising steps towards
parameterization by cycle count, which is expected to be low in many applications.
Specifically, we explore
the structural properties and drawing algorithms for acyclic and single-cycle
Reeb graphs.

\section{Preliminaries}\label{sec:background}
We lay out the foundational definitions of general and generic Reeb graphs in
\secref{reeb-graphs}, allowing us to define
the problem of minimum crossing number for Reeb graphs formally in
\secref{drawing-reeb-graphs}. Finally, we argue that we can focus our inquiry to
the more specialized domain of refined Reeb graphs in
\secref{refined-generic-reeb-graphs}.

\subsection {Reeb Graphs} \label{sec:reeb-graphs}

A Reeb graph, as first explained in~\cite{greeb}, is a mathematical construct
that represents the topological structure of a scalar function defined on a
manifold.
At a high level, the Reeb graph is a graph that encodes the connected components
of level sets and the critical points of a function defined over a manifold.

To define the Reeb graph, let $\mfd$ be a compact,
connected, and o-minimal
manifold of dimension $n$, and let $h: \mfd
\rightarrow \R$ be a smooth function. Define an equivalence relation on the
points of $\mfd$
by setting~$y \sim y'$ if $y$ and $y'$ are in the same
path component of $h^{-1}(a)$ for some~$a \in \R$.
Assuming the starting space and function are well-behaved (i.e.,~Morse functions
on manifolds, or tame functions on constructible spaces
\cite{Edelsbrunner2008a,deSilva2016}), the quotient space~$G_h = \mfd /\sim$ is
a finite graph~$G$
with the following structure:
The vertices of $G$ correspond to critical points of~$h$, that is,
the points~$p \in \mfd$ whose differential is zero ($dh_p=0$).
The critical points
of $h$ partition $\mfd$ into regions called cells, where each cell is a locally
path-connected compact space.
Two vertices are connected by an edge in~$G$ if their corresponding critical
points lie at the boundary of a common cell.
A \emph{Reeb graph} is the pair~$(G,h)$.

In graph theoretical terms, this means we have a graph $G = (V,E)$, where the
vertex set~$V$ comes with a {\em height} function $h : V \to \R$.

Let the degree $\deg(v)$ of a vertex be the number of edges in $E$ with $v$ as an endpoint.
Furthermore, we define the {\em downward degree} $\deg_{\downarrow} (v)$ as the
number of edges that connect~$v$ to a vertex of lower
height:~$\deg_{\downarrow} (v) = \{ w \in V | (v,w) \in E, h(w) < h(v)\}$,
and similarly the {\em upward degree} $\deg_{\uparrow} (v)$ as the number of edges
that connect $v$ to a vertex of higher
height:~$\deg_{\uparrow} (v) = \{ w \in V | (v,w) \in E, h(w) > h(v)\}$.
Note that $\deg_{\downarrow} (v) + \deg_{\uparrow} (v)
= \deg(v)$ for all~$v \in V$.

We define a \emph{generic Reeb graph} as a Reeb graph $(G,h)$ such that:  every
vertex has a unique height, so $h(v) \neq h(w)$ for any $v \neq w$;  every
vertex has $\deg(v) = 1$ or $\deg(v) = 3$; and furthermore $\deg_\downarrow (v)
\leq 2$ and $\deg_\uparrow (v) \leq 2$.

\subsection {Drawings of Reeb Graphs} \label{sec:drawing-reeb-graphs}

A {\em drawing} of a Reeb graph is an immersion of $G$ into $\R^2$. We have the
following constraints on this immersion:
\begin {itemize}
    \item Each vertex $v \in V$ gets mapped to a unique point $(x,y) \in \R^2$,
        with the additional requirement that $y = h(v)$.
    \item Each edge $e=(v,w) \in E$ gets mapped to a continuous $y$-monotone
        curve with $v$ and $w$ as endpoints.
\end {itemize}

This problem definition is somewhat related to the notions of {\em upwards}
drawings and {\em level} drawings, given the height functions and edge
constraint, although the concepts are not quite identical.  In particular, in
upwards planar, the exact height is not fixed for a vertex, but rather all
(directed) edges must orient upwards.  Level planar drawings generally fix
integer-valued heights, and not real-valued, and algorithms in this setting
generally seek drawings with no crossing, rather than attempting to minimize
crossings as we do in this paper.

Minimizing edge crossings in a graph drawing is a desirable aesthetic criterion.
This leads to the computational problem below:
\begin{problem}[Reeb Graph Crossing Number]
    Given a Reeb graph, the Reeb graph crossing number (RGCN) is the
    minimum number of crossings among all drawings of that Reeb graph.
\end{problem}
Ultimately, finding a drawing realizing the minimum number of crossings is the
goal.
Computing the crossing number is NP-hard in general~\cite{garey-johnson}, but to
the best of our knowledge the Reeb crossing number has not been studied
previously.  Notably, the proof by Garey and Johnson~\cite{garey-johnson} does
not extend to
this special case, as it uses high-degree vertices.  Note that our proof does
not rely on vertices having unique heights, but does ensure low degree vertices.

\subsection {\Subdivided 
Reeb Graphs} \label{sec:refined-generic-reeb-graphs}

Given a generic Reeb graph $(G,h)$, consider the finite set of heights. These
heights define a set of `levels' in the Reeb graph.
A Reeb graph or generic Reeb graph may have edges between vertices at non-consecutive levels.
By subdividing any longer edges so that all edges are between consecutive
levels, we arrive at a \subdivided Reeb graph:

\begin {definition}[\Subdivided Reeb Graph]

    A {\em \subdivided 
    Reeb graph} $(G=(V,E),h)$ is a Reeb graph with the~properties:
    \begin {itemize}
        \item every edge $e = (v,w)$ in $E$ is between consecutive levels
        \item $deg_\downarrow (v) \leq 2$ and $\deg_\uparrow (v) \leq 2$ and $\deg(v)
            \leq 3$ (note: but no longer necessarily $\deg(v) = 1$ or $3$, it can
            also be $2$).
    \end {itemize}

\end {definition}

We next prove that a generic Reeb graph
and its associated \subdivided Reeb graph have the same crossing complexity.

\begin{lemma}[Constructing a \Subdivided Reeb Graph]
    For every generic Reeb graph, there is a \subdivided 
    Reeb graph, and
    their optimal drawings have the same number of crossings.
\end{lemma}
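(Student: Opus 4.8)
The plan is to establish a bijection between drawings of a generic Reeb graph and drawings of its subdivided version, and argue that this bijection preserves the crossing count. The core idea is that subdivision is a purely combinatorial operation: whenever we have a long edge $e=(v,w)$ spanning non-consecutive levels, we replace it with a path through new degree-$2$ vertices, one placed at each intermediate level. I would first verify that this subdivision operation produces a valid subdivided Reeb graph, i.e.\ that each new vertex sits at a distinct height between $h(v)$ and $h(w)$, that every resulting edge connects consecutive levels, and that the degree constraints ($\deg_\downarrow \le 2$, $\deg_\uparrow \le 2$, $\deg \le 3$) are maintained—in fact each subdivision vertex has $\deg_\downarrow = \deg_\uparrow = 1$, so it is trivially within bounds.

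Next I would exhibit the two directions of the correspondence between drawings. Given a drawing $D$ of the generic Reeb graph, each edge $e$ is drawn as a $y$-monotone curve; because the curve is $y$-monotone, it crosses each intermediate level exactly once, so I can place each subdivision vertex at the unique point where the curve meets its assigned level. This yields a drawing $D'$ of the subdivided graph in which the image in $\R^2$ is identical to that of $D$. Conversely, given a drawing $D'$ of the subdivided graph, I would contract each subdivision path back to a single $y$-monotone curve; since concatenating $y$-monotone curves sharing endpoints yields a $y$-monotone curve, the contracted edge is again a valid edge drawing, producing a drawing $D$ of the original graph with the same image.

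The crux is then to observe that this correspondence preserves crossings. Since the point sets in $\R^2$ are literally the same before and after, any transversal intersection between two edge images in $D$ corresponds to an intersection between the corresponding edge images in $D'$, and vice versa. The only subtlety, and the step I expect to require the most care, is handling intersections that occur exactly at a subdivision vertex: in $D'$ a crossing could in principle be hidden at or routed through one of the new vertices, so I would argue that we can always perturb the drawing slightly to avoid such degeneracies without changing the crossing count, or appeal to the standard convention that drawings are in general position. Having ruled out that degeneracy, the crossing counts of $D$ and $D'$ coincide exactly.

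Finally, since the map $D \mapsto D'$ and its inverse both preserve the number of crossings, they restrict to a correspondence between crossing-minimal drawings of the two graphs. Taking the minimum over all drawings on each side gives that the two Reeb graphs have equal optimal crossing number, which is precisely the claim. I would close by noting that this justifies restricting all subsequent analysis to subdivided Reeb graphs without loss of generality.
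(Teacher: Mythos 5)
Your proposal is correct and follows essentially the same route as the paper: subdivide long edges at intermediate levels, use $y$-monotonicity to place the new vertices on the existing curve so the drawing's image is unchanged, and observe that the correspondence is reversible and crossing-preserving in both directions. Your extra care about crossings degenerating onto subdivision vertices is a minor refinement the paper handles implicitly (by asserting crossings occur strictly inside the strips between consecutive levels), not a different approach.
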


\begin {proof}
    We first provide an explicit construction.
    Let $G$ be a generic Reeb graph. We construct a \subdivided 
    Reeb graph $G'$ as follows.
    Let $v_1, v_2, \ldots, v_n$ be the vertices of $G$ sorted by $h$; as the
    graph is generic, we know heights are unique.
    For every vertex $v_i$ of $G$, we create a vertex $v_i'$ of~$G'$ and
    set~$h(v_i')=i$.
    Then, for every edge $v_iv_j$ of~$G$ with $i < j$, we subdivide it by adding
    a vertex $x_k$ for each~$k \in [i+1, j-1]$. Hence, the edge~$v_iv_j$ becomes
    the path~$v'_i
    x_{i+1} x_{i+2} \ldots x_{j-1} v'_j$.

    We now argue that optimal drawings of $G$ and $G'$ have the same number of
    crossings. Clearly, a drawing $\Gamma$ of $G$ can be transformed into a
    drawing $\Gamma'$ of~$G'$ by simply drawing a horizontal line through each
    vertex of $\Gamma$ and scaling the strips between consecutive lines such
    that their height becomes $1$; because edges are $y$-monotone curves, they
    only cross each strip once. We draw additional vertices where necessary on
    the strip boundaries, subdividing longer edges at these heights. This does
    not change the number of crossings, which occurs inside of the strips.
    We note that this transformation is reversible, as we can ``un-subdivide''
    edges and rescale the vertices, so a drawing of $\Gamma'$ also corresponds
    to a drawing of $\Gamma$ under this reverse transformation.
\end {proof}

\section{Straight-Edge or Curved Drawings}

\label{sec:straight-or-curved}

A well-known result in graph theory is that a given graph has a plane embedding if
and only if it has a straight-line plane embedding.
We show here that the same is true for Reeb graphs without crossings.
However, the same is not true for drawings with minimal crossing Reeb graphs:
requiring the edges to be straight line segments may result in a
higher crossing~number.

\begin{lemma}
    Suppose we are given a planar (zero-crossings) drawing of a \subdivided,
    generic Reeb graph~$G$, where every edge
    is an $xy$-monotone curve and every vertex is at its specified~$y$-coordinate.
    Then, there exists another planar drawing of $G$, where the vertices are in
    the same order in the $x$-dimension, the vertices are at the same
    $y$-coordinates, and the edges are drawn as straight line segments.
    \label{lem:straight}
\end{lemma}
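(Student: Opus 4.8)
The plan is to exploit the fact that $G$ is subdivided, so every edge connects a vertex on some level $y=i$ to a vertex on the consecutive level $y=i+1$. First I would observe that each edge therefore lies entirely within the horizontal strip $\{(x,y) : i \le y \le i+1\}$ for the appropriate $i$, and that two edges can only meet if they inhabit the same strip. Consequently the whole drawing is planar exactly when the restriction to each strip is planar, and it suffices to straighten the edges strip by strip while leaving every vertex at its original position. Keeping the vertices fixed automatically preserves both their $y$-coordinates and their left-to-right order, so the only thing left to check is that replacing each curve by the segment joining its endpoints introduces no crossing inside any strip.

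The key step is an \emph{order-consistency} claim: within a single strip, for any two edges with distinct endpoints, the left-to-right order of their lower endpoints on $y=i$ agrees with the order of their upper endpoints on $y=i+1$. I would prove this with the intermediate value theorem. Since each edge is $y$-monotone, within the strip it can be written as a function $x = x_e(y)$ for $y \in [i, i+1]$. If two edges $e, f$ had $x_e(i) < x_f(i)$ but $x_e(i+1) > x_f(i+1)$, then $x_e - x_f$ would change sign on $[i,i+1]$ and hence vanish at some interior height, meaning the two curves cross, contradicting planarity of the given drawing. Here I only use $y$-monotonicity; the stronger $xy$-monotonicity in the hypothesis is not needed.

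Finally I would straighten. Replace each edge by the straight segment between its unchanged endpoints; such a segment is $y$-monotone and lies in the same strip as before. Two such segments, say $[(a_e,i),(b_e,i+1)]$ and $[(a_f,i),(b_f,i+1)]$, have affine $x$-coordinate difference $\delta(y)$ with $\delta(i)=a_e-a_f$ and $\delta(i+1)=b_e-b_f$. By the order-consistency claim these two boundary values have the same sign whenever the endpoints are distinct, so $\delta$ has no zero in the open interval and the segments do not cross in their interiors. If two edges share an endpoint (a vertex has up- and down-degree at most two here), one boundary value is zero and the other is not, so again $\delta$ has no interior zero and the segments meet only at the shared vertex. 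Hence every strip, and therefore the entire drawing, is crossing-free, which yields the desired straight-line planar drawing.

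I expect the main obstacle to be stating and proving the order-consistency claim cleanly, in particular writing each $y$-monotone edge as a single-valued function of height and handling edges that share a vertex so that both the intermediate-value argument and the final non-crossing check are rigorous; the strip decomposition and the affine crossing criterion for straight segments are then routine.
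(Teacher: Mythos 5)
Your proof is correct, but it takes a genuinely different route from the paper. The paper adapts the shift method of De Fraysseix et al.: it extracts a partial left--right order on the edges from the planar drawing, extends it to a total order on the vertices by repeatedly inserting a ``free'' vertex, and then places each vertex sufficiently far to the right of its predecessors so that its straight edges avoid everything already drawn. That argument preserves only the $x$-order of the vertices (their $x$-coordinates move), and it does not visibly exploit the subdivided structure. You instead lean directly on the hypothesis that $G$ is subdivided: every edge lives in a single unit strip between consecutive levels, crossings can only occur inside a strip, planarity forces the top and bottom endpoint orders of the edges in a strip to agree (an intermediate-value argument on $x_e(y)-x_f(y)$), and the affine interpolation of consistently ordered endpoints cannot cross. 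This keeps every vertex exactly where it was, which is a slightly stronger conclusion than the lemma asks for, and it avoids the bookkeeping of the free-vertex insertion order; the price is that it only works because edges span exactly one level, whereas the paper's shift-method argument would survive in more general level-drawing settings. The details you flag (single-valuedness of $x_e(y)$ for weakly $y$-monotone curves, and edges sharing an endpoint) are handled as you describe; the one degenerate case neither your argument nor the paper's addresses is a pair of parallel edges between the same two consecutive-level vertices, whose straightened segments would coincide.
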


To prove this, we adapt the ``shift method'' of~\cite{DeFraysseix1990}, which is
for planar triangulations but can be adapted to our setting.
See~\figref{obs-curves-are-straight} for an example of this process.

\begin{proof}

    Begin with a drawing of $G = (V,E)$ with $xy$-monotone curves for the edges
    in $E$.
    We know that the edges of the drawing are in a partial order $P(E)$ based on
    their left-right relationship.  Therefore, for any two edges that share a
    vertical span, because the drawing is non-crossing, one edge must be to the
    left of the other, and this relation is acyclic, because the drawing is
    planar.

    Next we construct a total order $T(V)$ on the vertices from left to right.
    We construct this order by inserting vertices as follows: first, add the
    leftmost vertex (leftmost lowest in case of ties). Then, we say a vertex is
    {\em free} if all its incident edges to previous vertices in the order have
    the property that all edges that come before it in the partial order $P(E)$
    are between vertices that were already added before.
    There is always at least one free vertex, and we may choose any free vertex
    to insert into $T(V)$. Repeat until we have inserted all vertices.

    Now, process the vertices in the order $T(V)$. Whenever we process a new
    vertex, by construction, all vertices it needs to connect to are currently
    exposed to the right. Therefore, we can choose its $x$-coordinate
    sufficiently far to the right so that its edges do not intersect~anything.
\end{proof}

    \begin{figure}[htb]
        \centering
        \includegraphics[height=1.2in]{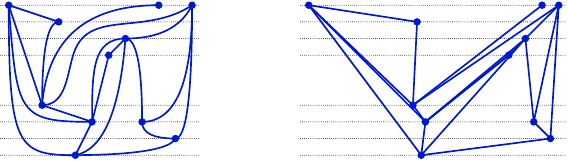}
        \caption{Two level drawings of the same graph, one with $y$-monotone curves
        and one with straight line segments.
        }
        \label{fig:obs-curves-are-straight}
    \end{figure}

However, when there are crossings present in the curved drawing, a similar
statement does not hold.

\begin{lemma}[Non-Stretchable Graph Drawing]\label{lem:nonstretchable}
    There exists a Reeb graph $(G=(V,E),h)$ and a drawing $\Gamma$ of $G$ such that:
        every vertex $v \in V$ has $y$-coordinate $h(v)$ in $\Gamma$;
       every edge is an $xy$-monotone curve in $\Gamma$; $\Gamma$ has $k$ crossings;
       and there exists no straight-line drawing of~$G$ with $\leq k$ crossings.

\end{lemma}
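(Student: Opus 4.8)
The plan is to construct one explicit Reeb graph $G$, exhibit a curved drawing $\Gamma$ of it with $k$ crossings, and then show that \emph{every} straight-line drawing of $G$ has more than $k$ crossings. The phenomenon we are after cannot come from short edges: an edge between two consecutive levels behaves almost identically for curves and for segments, and the preceding lemma already rules out any gap in the crossing-free case, so the witness must live in the regime $k \ge 1$ and must exploit \emph{long} edges that span several levels. The point is that a long edge drawn as a $y$-monotone curve may swing left and right across its vertical span, whereas as a straight segment its $x$-coordinate is an affine function of $y$ and is therefore pinned down by its two endpoints. This is also the reason we do not pass to a \subdivided graph: subdividing a long edge inserts free break-points that let a piecewise-linear drawing imitate a curve, so the straight-line crossing number is \emph{not} subdivision-invariant even though, by the construction lemma, the curved one is. Consequently the whole effect is a property of a specific graph carrying genuine long edges.

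Concretely, I would build $G$ around a single long edge $e=(s,t)$ with $s$ on the bottom level and $t$ on the top level, together with a small rigid ``frame'' and one obstacle vertex $p$ placed at an intermediate level inside the frame. The frame -- a short chain of nested cycles -- is designed so that in any cheap drawing its vertices appear in one forced left-to-right order (up to a global reflection), pinning both $s$ and $t$ to the far right while leaving $p$ in the interior, and so that the frame alone accounts for exactly $k$ crossings. In the curved drawing $\Gamma$ I route $e$ so that it leaves $s$ on the right, dips to the left of $p$, and returns to $t$ on the right; this detour is $y$-monotone, it lets $e$ avoid the edges incident to $p$, and it realizes $\Gamma$ with exactly $k$ crossings.

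The heart of the matter is the straight-line lower bound, and the key geometric observation is that in any straight-line drawing the $x$-coordinate of $e$ at the height of $p$ is a convex combination of the $x$-coordinates of $s$ and $t$; once both are forced to lie to the right of $p$, so is $e$, and the crossing with an edge incident to $p$ that the curve avoided now becomes unavoidable, giving at least $k+1$ crossings. The difficulty is that this must be argued for \emph{all} straight-line drawings at once, including those that spend part of the crossing budget to reorder the frame rather than keep its cheap order, so I cannot simply point at one forbidden crossing. I expect to control this in one of two ways: either by keeping the example small enough that the straight-line crossing number, being a minimum over the finitely many level-respecting order types of $V$, can be verified type-by-type to be at least $k+1$; or by replicating the gadget in many non-interfering copies so that no single reordering can repair more than one copy, forcing the total to exceed $k$. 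This global lower bound -- essentially a small non-stretchability obstruction, in the spirit of the classical fact that not every arrangement of pseudolines is realizable by straight lines -- is the step I anticipate being the main obstacle.
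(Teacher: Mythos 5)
Your strategy is genuinely different from the paper's, and in its current form it has a real gap at exactly the step you flag. The paper does not build a rigidity frame around a single long edge; it takes a known non-stretchable arrangement of nine $xy$-monotone pseudolines (the non-Pappus arrangement, \cite[Fig.~5.3.2]{hcg-pseudolines}), turns the pseudolines into long edges of a Reeb graph, and then adds the dual graph of the arrangement plus nested wrapping copies so that any drawing with at most $k$ crossings is forced to realize the combinatorial structure of that arrangement --- at which point non-stretchability (a combinatorial, order-type obstruction) rules out straight lines. Your obstruction is instead metric: the $x$-coordinate of a straight segment at an intermediate height is a convex combination of its endpoints' $x$-coordinates, so a segment cannot detour around an obstacle the way a monotone curve can. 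That is a perfectly reasonable alternative source of a curved/straight gap, and your observation that the phenomenon is destroyed by subdivision (so the witness must carry genuine long edges) is correct and matches the paper's construction.

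The gap is that the ``rigid frame'' is never constructed, and without it the lemma is not proved: you need that in \emph{every} drawing with at most $k$ crossings both endpoints $s$ and $t$ land on the same side of $p$, and neither of your two fallback strategies closes this. The exhaustive check over order types is a finite computation in principle, but you have not specified the graph, so there is nothing to check; and the replication argument does not work as stated, because if reordering the frame in one copy saves more crossings than it costs, the adversary simply performs the same profitable reordering in every copy --- amplification only defeats reorderings whose cost is shared across copies, which yours are not (the copies are explicitly ``non-interfering''). Forcing positional rigidity cheaply is genuinely delicate (the paper itself resorts to twenty nested wrapping copies and to hexagonal-grid gadgets elsewhere precisely to manufacture rigidity, and still flags its own counting argument as informal). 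So to complete your route you would have to either exhibit a concrete small graph and verify the straight-line minimum over all level-respecting order types, or design a frame with a quantitative statement of the form ``any drawing in which $s$ or $t$ lies left of $p$ already incurs at least $k+1$ crossings,'' proved directly. As it stands, the central claim of the lemma --- the universal lower bound over all straight-line drawings --- is asserted but not established.
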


\begin{proof}
    We prove this by providing a graph drawing that cannot be drawn with
    straight lines using less crossings. We start with the well-known
    nonstretchable pseudoline arrangement in
    \subfigref{obs-curves-are-not-straight}{lines},
    see~\cite[Fig.~5.3.2]{hcg-pseudolines} and the associated proof that this
    line arrangement cannot be drawn with straight lines.
    Instead of thinking of these as infinite pseudolines, we view this as an
    arrangement of nine $xy$-monotone edges and connect the endpoints of the edges
    `in order' as shown with the black edges of
    \subfigref{obs-curves-are-not-straight}{graph}.
    Next, we add the dual graph to this arrangement, shown in red. For each
    `outermost' red vertex, we add edges to the pseudoline endpoints on the
    boundary of the two-cell containing that vertex (shown in green). Finally,
    we wrap this construction with 20 copies of the black graph, each slightly
    larger than the last (shown in orange), and connecting consecutive copies of
    a vertex by a single edge.

    \begin{figure}[h!t!b!]
        \centering
        \begin{subfigure}[b]{0.45\textwidth}
            \includegraphics[height=2in]{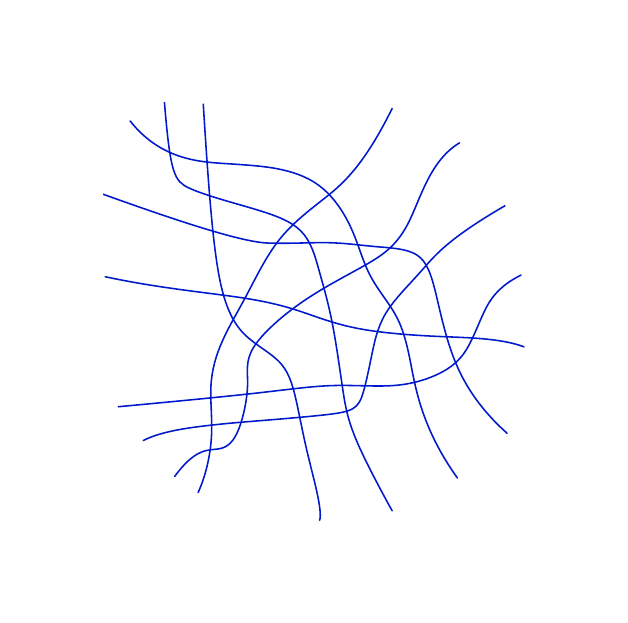}
            \caption{Non-Stretchable Arrangement}\label{fig:obs-curves-are-not-straight-lines}
        \end{subfigure}
        \begin{subfigure}[b]{0.45\textwidth}
            \includegraphics[height=2in]{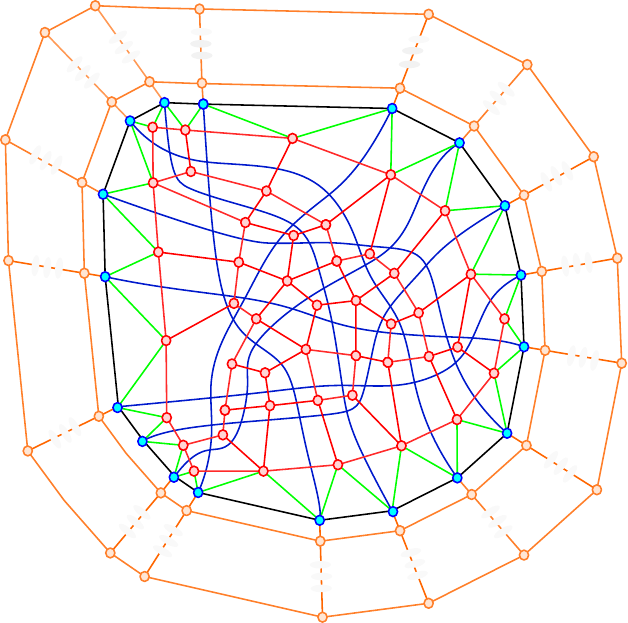}
            \caption{Non-Stretchable Graph Drawing}\label{fig:obs-curves-are-not-straight-graph}
        \end{subfigure}
        \caption{
            Building a non-stretchable graph drawing \lemref{nonstretchable}).
            \subonlyref{obs-curves-are-not-straight-lines}
            This arrangement of $xy$-monotone pseudoline is not stretchable, which
            means it cannot be drawn with straight lines.
            \subonlyref{obs-curves-are-not-straight-graph}
            This graph contains the pseudoline arrangement and its dual as
            sub-structures.
        }\label{fig:obs-curves-are-not-straight}
   \end{figure}
   We count the crossings: there are $36$ blue crossings,
   and each blue curve crosses exactly nine red edges.  As the red graph is dual to the blue graph, any other way
   to draw these two graphs results in more than nine crossings per blue
   edge, so we cannot change the combinatorial
   embedding of the blue arrangement without increasing the number of~crossings.
   \journal{this argument is a bit hand-wavy ...}
\end{proof}

\section{NP-Hardness}\label{sec:np-hard}
In this section, we prove that the drawing problem for \subdivided generic Reeb
graphs is NP-hard.
\begin{theorem}[RGCN is NP-Hard]\label{thm:NP-hardness}
    Given a generic Reeb graph $(G=(V,E),h)$,
    and an integer $k$, deciding if the
    crossing number is less than $k$ is~NP-hard.
\end{theorem}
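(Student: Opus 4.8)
The plan is to reduce from the two-level (bipartite) crossing-minimization problem, which is NP-hard~\cite{garey-johnson}: given a bipartite graph with its vertices split between a top layer and a bottom layer, order each layer so as to minimize the number of edge crossings when edges are drawn as straight segments. Such an instance is already almost a Reeb drawing—two levels, $y$-monotone edges between consecutive levels—so the only feature that violates our setting is that the layer vertices may have high degree. I would therefore phase the reduction through \subdivided Reeb graphs: by the preceding lemma, crossing numbers are preserved under subdivision, so it suffices to build a \subdivided generic Reeb graph whose optimal number of crossings encodes the two-level optimum. The unique-height (genericity) requirement can be restored at the very end by perturbing the integer levels by tiny distinct offsets, which changes no crossing, and degree-$2$ artifacts can be removed or capped by pendant edges so that every genuine vertex has degree $1$ or $3$.

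The heart of the construction is a degree-reduction gadget. For each high-degree layer vertex $u$, I would replace $u$ by a bounded-degree ``comb'' (a caterpillar spine): a vertical path of new vertices placed on freshly inserted consecutive levels, where each spine vertex carries at most one extra incident edge, so that every internal spine vertex has $\deg_\downarrow \le 2$, $\deg_\uparrow \le 2$, and $\deg \le 3$. Each original edge $(u,v)$ then becomes a $y$-monotone path: it leaves $u$'s comb at the appropriate spine vertex, descends through the intermediate levels, and enters the comb replacing $v$. The intended behavior of a comb is to act \emph{rigidly}—in an optimal drawing it keeps its teeth emanating in a single consistently ordered fan—so that the horizontal order produced at the comb levels faithfully reflects one choice of position for the original vertex $u$, and crossings between the edges of two different combs mirror exactly the crossings between $u$ and another vertex in the two-level instance.

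With the gadgets fixed, correctness splits into the two standard directions. In the forward direction, any pair of layer orderings realizing $c$ crossings yields a Reeb drawing with $c + C_0$ crossings, where $C_0$ is a layout-independent number of crossings forced by the combs and the routing paths; choosing the threshold $k = c^\ast + C_0$ then transfers the decision. The reverse direction is the delicate one: I must show that no Reeb drawing can beat paying $C_0$ inside the gadgets plus the crossings of some honest ordering. Here I would lean on a weighting-by-replication argument—thickening the spine edges into bundles of many parallel subdivided paths, or replicating the whole instance—so that any drawing that ``cheats'' by reordering a comb's teeth or by routing gadget edges through the region that encodes the combinatorial choice incurs a crossing penalty exceeding the entire budget of interesting crossings.

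The main obstacle I anticipate is precisely this rigidity/faithfulness analysis. In the topological crossing number a degree-reduction gadget can be drawn planar and hence crossing-free in isolation, but here the combs live between \emph{fixed} levels and share horizontal strips with unrelated edges passing through, so I cannot simply assert that an optimal drawing leaves each gadget internally planar and order-preserving. The careful part is choosing the heights at which each comb branches, and proving a matching lower bound on within-strip crossings, so that no foreign edge is ever forced to cross a comb more cheaply than it would cross the encoded configuration; making the replication penalties dominate unambiguously is the lever I would use to close this gap.
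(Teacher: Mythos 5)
Your proposal takes a genuinely different route from the paper, but as written it has a gap at exactly the step you flag as ``the delicate one,'' and that gap is the whole difficulty of the theorem. The paper does not reduce from two-level crossing minimization at all: it reduces from Optimal Linear Arrangement (OLA), builds one triangular hexagonal grid $T_{|E|}$ per vertex of the OLA instance (a top copy and a bottom copy), joins each pair of copies by a bundle of $|E|^2$ parallel subdivided edges, and routes one edge of $H$ per edge of the OLA instance between the grids. The quantity being counted is then not a simulated two-layer crossing number but the number of times the routing edges cross the bundles, namely $(|f(v)-f(w)|-1)\cdot|E|^2$ per edge, which directly encodes $\sum_{\{v,w\}\in E}|f(v)-f(w)|$; consistency of the top and bottom orderings is forced because a disagreement costs $|E|^4 > k'$ crossings. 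In other words, the paper never needs its vertex gadget to faithfully reproduce pairwise crossing counts between the stars of high-degree vertices---it only needs the bundles to be too expensive to cross gratuitously.

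Your reduction, by contrast, must establish that replacing each high-degree layer vertex by a comb changes the optimal crossing number by a layout-independent constant $C_0$. This is precisely the step you leave open, and it is genuinely hard rather than routine: the Garey--Johnson two-layer instances are multigraphs with high-degree vertices, and naive degree reduction (splitting a vertex into a bounded-degree tree) does not in general preserve crossing numbers---this is why the paper explicitly notes that the Garey--Johnson proof ``does not extend to this special case, as it uses high-degree vertices.'' Two concrete problems with the comb: (i) crossings between edges attached to different teeth of the \emph{same} comb, and between a foreign edge and a comb's spine, are not obviously layout-independent, so $C_0$ is not well defined without a matching lower bound holding for \emph{every} drawing, not just the intended one; (ii) your proposed fix---thickening spine edges into bundles of parallel paths---cannot be done literally inside a generic Reeb graph, where every vertex has degree at most $3$; realizing a rigid bundle under that degree constraint is itself a gadget-design problem, and solving it essentially forces you to build something like the paper's hexagonal grids. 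So the proposal is a plausible plan, but the lemma that would turn it into a proof (rigidity of the combs plus a lower bound showing no drawing beats $C_0$ plus the cost of an honest ordering) is missing, and that lemma is the hard part.
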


Our proof is inspired by the proof of Garey and Johnson~\cite{garey-johnson};
however, we require several new ideas. Before we explore the NP-hardness proof,
we explain the problem we use to prove NP-hardness and describe a crucial
structure in the proof in detail.  We transform a known NP-complete problem,
optimal linear
arrangement (OLA)~\cite{GAREY1976}, to the generic Reeb graph crossing number
problem.
We interpret this as ``arranging'' the vertices of the graph in a linear
order, such that the total length of all the edges of the graph is
less than $k$.
Formally, the OLA problem~\cite{GAREY1976} is defined as follows:
Given a graph $G=(V,E)$ and an integer $k$, is there a function $f: V
\rightarrow \{1,2,...,|V|\}$ such that $\sum_{\{v,w\}\in E} |f(v)-f(w)| \leq k$.

\subsection{Triangular Hexagonal Grid}

We construct a triangular hexagonal grid $T_k$ using vertices and edges where
$k$ is the number of rows in the grid and each row $i$ consists of $i$ hexagons.
A grid~$T_k$ consists of $k^2 + 4k + 1$ vertices and $\frac{3}{2}(k^2+3k)$
edges. The use of hexagonal grids maintains the characteristics of generic Reeb
graphs, such as maximum vertex
degree, while safeguarding the graph structure against changes caused by
swapping vertices along the vertical axis. This precaution is essential as such
alterations would result in a greater number of crossings per swaps; see \figref{thg}.
\begin{figure}[hbtp]
    \centering
    \includegraphics[height=1.5in]{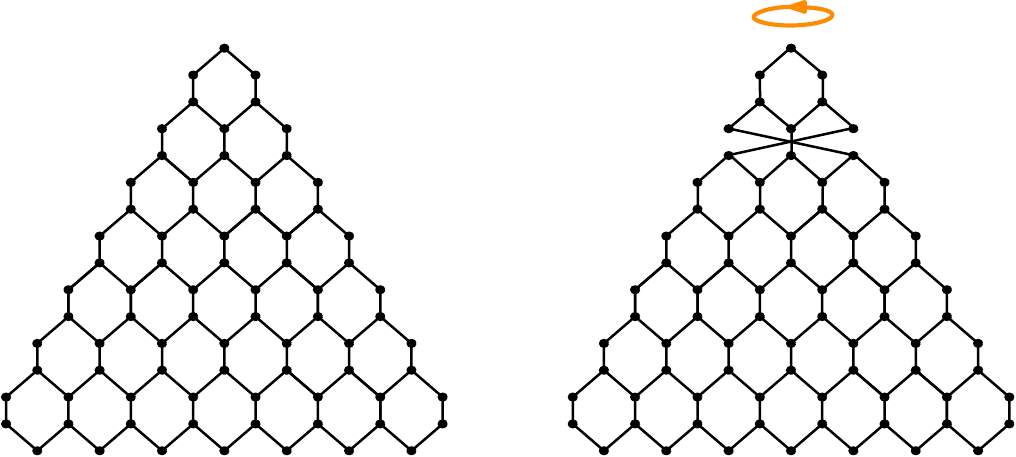}
    \caption{A triangular hexagonal grid on the left and the same grid if we
        twist the structure vertically once over the second row.}
    \label{fig:thg}
\end{figure}
Therefore, we use it in \thmref{NP-hardness} to build our gadget for the
NP-hardness~proof.
\journal{We should argue somewhere why each hexagonal grid must be embedded
without crossings. All of the arguments above assume that these are somehow
``rigid'' blocks...}

\subsection{Construction} \label{sec:construction}
\begin{figure}[hbtp]
    \centering
    \includegraphics[height=1.2in]{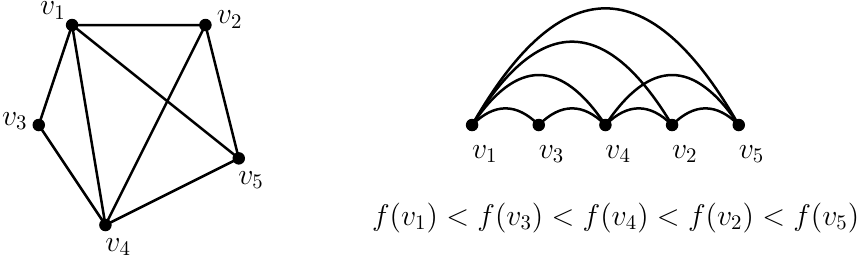}
    \caption{Graph on the left and its image based on an optimal order
    defined by a function~$f: V \rightarrow \{1,2,...,|V|\}$ on the right.
    }
    \label{fig:ex-graph-nphardproof}
\end{figure}
Suppose we have an instance of the OLA problem, given
a graph $G = (V,E)$ and $k$; see
\figref{ex-graph-nphardproof} for an example.
We construct a generic Reeb graph $H= (V_1 \cup V_2, E_1 \cup E_2 \cup E_3
\cup E_4)$ such that:
We replicate every vertex in the set $V$ to form the set $V'$.
The vertices
of~$V$ are positioned horizontally in an arbitrary order as $v_1,v_2,
\ldots v_n$
and the same ordering is applied to the vertices of $V'$. Subsequently, all
vertices in both $V$ and $V'$ are
transformed into triangular hexagonal grids $T_{|E|}$, each consisting of~$|E|^2+4|E|+1$ vertices
with $|E|^2+1$ vertices at the bottom of the triangular structure.  We then
employ the middle vertex as a connector (analogous to an extension cord) to
facilitate the drawing of edges of $H$.  Furthermore, we establish edges
between the corresponding $|E|^2$ vertices for each pair of hexagonal grids.

\figref{ex-graph-structure-nphardproof} shows a construction based on an
arbitrary order of the vertices, and
\figref{ex-graph-structure-nphardproof-sol} demonstrates a solution for our
problem given a solution for OLA.

\begin{itemize}
    \item $V_1 = \{|E|^2+4|E|+1 \text{ vertices}\}$ 
    \item $V_2 = \{2 \cdot \deg(v_i) \cdot |V| \text{ copies of } v_i \in V $\} 
    \item $E_1 = \{(v_i,w'_j) | v_i,w'_j \in V_2, \ (v_i,w_j) \in E, \ i>j \}$ 
    \item $E_2 = \{|E|^2 \text{ copies of }  \{v_i,v'_i\}; v_i \in V, v'_i \in V'\}$ 
    \item $E_3 = \{(v_{i_p},v_{i_{p+1}})| v_{i_p},v_{i_{p+1}} \in V_2 \}$ 
    \item $E_4 = \{\frac{3}{2}(|E|^2+|E|) \text{ edges}\}$ 
    \item $k' = |E|^2(k-|E|)+(|E|^2-1)$
\end{itemize}
Note that constructing $H$ and $k'$ from $G$ and $k$ can be done in
polynomial time. Additionally,~$H$ is a
connected graph only if $G$ is connected.

\subsection {Correctness}

Next, we argue that $H$ has an embedding with at most $k'$ crossings if and
only if $G$ has a linear arrangement of cost at most $k$.

\begin{figure}[hbtp]
    \centering
    \includegraphics[width=0.8\linewidth]{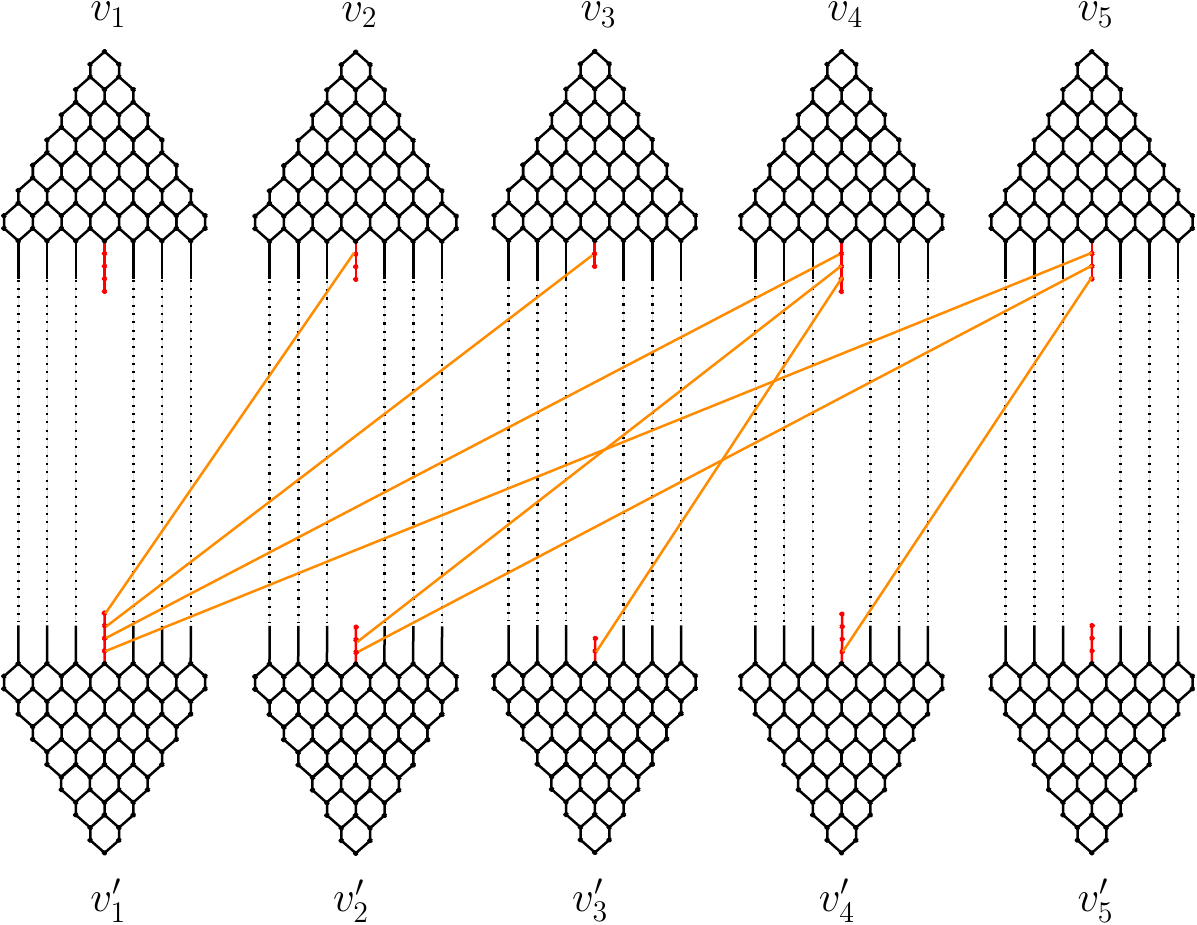}
    \caption{The corresponding Reeb graph construction of
    \figref{ex-graph-nphardproof} in our proof. Black vertices are~$V_1$ and
    red vertices are $V_2$. $E_1$ edges are shown in orange, $E_2$ edges are
    dotted, $E_3$ are in red and~$E_4$ are the edges of the triangular hexagonal
    grids represented in black. }
    \label{fig:ex-graph-structure-nphardproof}
\end{figure}

\begin{figure}[hbtp]
    \centering
    \includegraphics[width=0.8\linewidth]{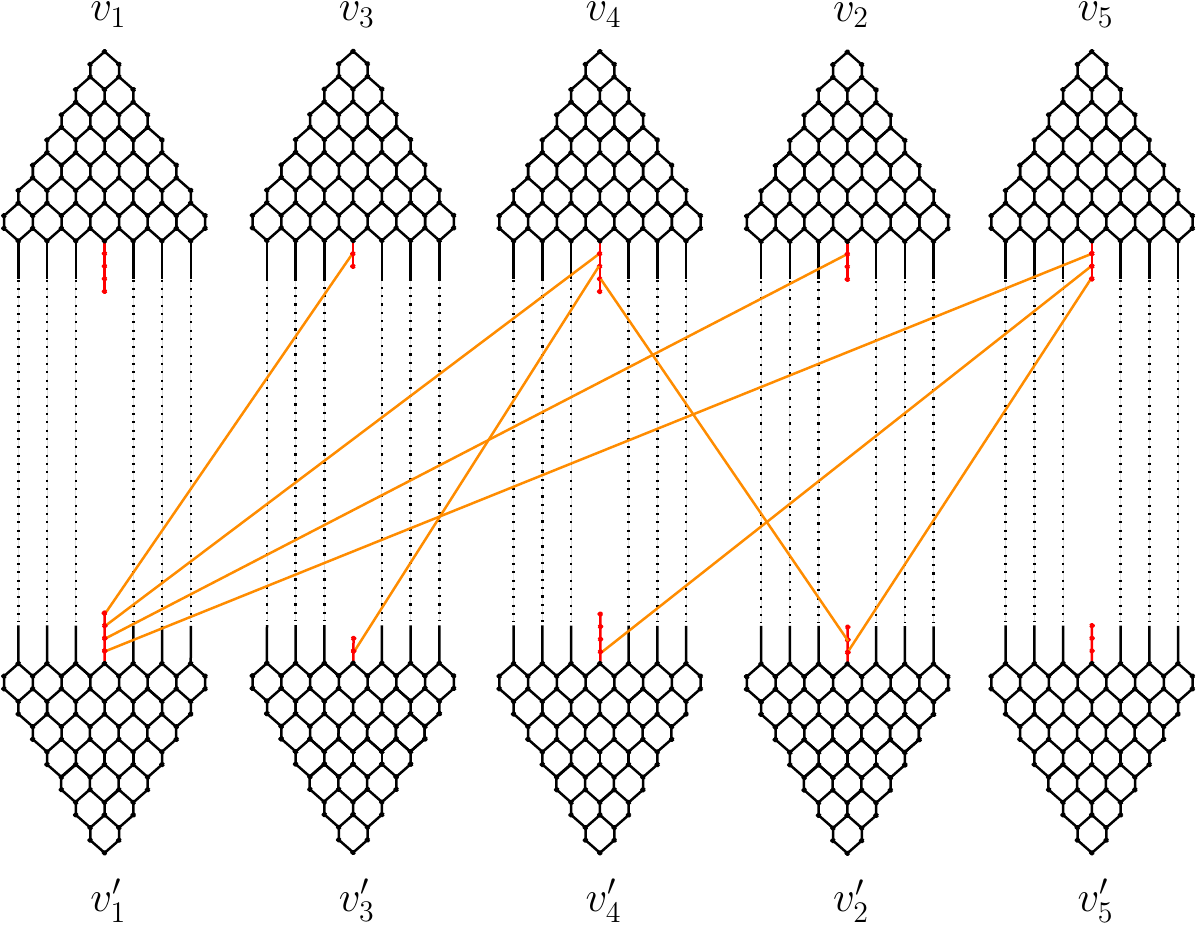}
    \caption{The layout of the Reeb graph of
        \figref{ex-graph-structure-nphardproof} given a solution to OLA problem.
        Black vertices are $V_1$ and red vertices are $V_2$. $E_1$ edges are shown
        in orange, $E_2$ edges are dotted, $E_3$ are in red and $E_4$ are the edges
        of the triangular hexagonal grids represented in black.}
    \label{fig:ex-graph-structure-nphardproof-sol}
\end{figure}

\begin{lemma} \label{lem:G-implies-H}
    A valid linear arrangement of $G$ implies a valid embedding of
    $H$.
\end{lemma}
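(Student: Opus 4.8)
The plan is to read the linear arrangement $f$ directly as the left-to-right order of the gadgets, and then show that the cost $\sum_{\{v,w\}\in E}|f(v)-f(w)|$ is what controls the crossing count. Concretely, I would place the hexagonal-grid gadget replacing $v_i\in V$ (together with its mirror copy $v_i'\in V'$) at horizontal rank $f(v_i)$, with the $V$-gadgets forming an upper band and the $V'$-gadgets a lower band, exactly as in \figref{ex-graph-structure-nphardproof-sol}. Each grid $T_{|E|}$ is drawn as a fixed crossing-free planar block with its $|E|^2+1$ bottom vertices lined up and its connector (``extension-cord'') vertex exposed; the $E_4$ edges live entirely inside these blocks and contribute nothing. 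This relies on the grids behaving as rigid blocks, which is precisely the point flagged in the journal note and which I would discharge by fixing one explicit planar embedding of $T_{|E|}$.

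Next I would route the connecting edges. The $E_2$ bundles --- the $|E|^2$ parallel edges joining $v_i$ to $v_i'$ --- are drawn as $|V|$ disjoint vertical ribbons, one per rank, crossing nothing among themselves. Each $E_1$ edge corresponds to a single edge $\{v_i,v_j\}\in E$ and is drawn as a $y$-monotone curve from the upper $v_i$-gadget to the lower $v_j'$-gadget, the constraint $i>j$ fixing the orientation. The key routing claim is that such a curve can be made to meet only the vertical ribbons strictly between ranks $f(v_i)$ and $f(v_j)$, entering and leaving through exposed connector vertices so that it avoids all grid blocks themselves.

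For the count, an $E_1$ edge of span $d=|f(v_i)-f(v_j)|$ meets exactly the $d-1$ intervening ribbons, each carrying $|E|^2$ edges, giving $(d-1)|E|^2$ crossings. Summing over $E$ yields $|E|^2\bigl(\sum_{\{v,w\}\in E}|f(v)-f(w)|-|E|\bigr)\le |E|^2(k-|E|)$, since the $\sum 1$ term counts the $|E|$ edges. The residual $(|E|^2-1)$ in $k'$ I would absorb into the fixed, arrangement-independent crossings forced where the $E_1$ and $E_2$ edges incident to a common gadget must be threaded through its connector; routing the $E_1$ edges in order of span keeps nested edges disjoint, so their mutual crossings reduce to this constant. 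Adding the two contributions gives a drawing with at most $|E|^2(k-|E|)+(|E|^2-1)=k'$ crossings, as required.

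The main obstacle is the bookkeeping at the gadget boundaries rather than the global count. I expect the global $|E|^2(k-|E|)$ term to fall out cleanly, but I would have to argue carefully that (i) each grid block can be laid out crossing-free while simultaneously exposing all connector vertices on the correct side and respecting $y$-monotonicity of the incident $E_1$ and $E_2$ edges, and (ii) no extra crossings are introduced by pairs of $E_1$ edges sharing an endpoint gadget or by the interaction between the upper and lower bands, so that the overhead is exactly the claimed $|E|^2-1$ and not more. Pinning down (ii) tightly enough to match the stated $k'$ is the delicate step.
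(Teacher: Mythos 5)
Your proposal matches the paper's proof: both place the gadgets in the left-to-right order given by $f$, count $(|f(v)-f(w)|-1)\cdot|E|^2$ crossings between each $E_1$ edge and the intervening $E_2$ ribbons to obtain the $|E|^2(k-|E|)$ term, and bound the remaining crossings among $E_1$ edges by $|E|^2-1$, summing to $k'$. The paper's version is in fact terser than yours---it simply asserts the $|E|^2-1$ bound and does not address the rigid-block and connector-routing issues you flag (which the authors themselves leave as a journal note)---so your added care there is a refinement of, not a departure from, the same argument.
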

\begin{proof}
    Assuming that there exists a function $f: V \rightarrow \{1,2,...,|V|\}$
    satisfying $\sum_{\{v,w\}\in E} |f(v)-f(w)| \leq k$, we utilize this order
    to construct $H$ as illustrated in
    \figref{ex-graph-structure-nphardproof}. Each edge $\{v_i,w_j\}$ in $E_1$
    intersects $(|f(v_i)-f(w_j)| -1) \cdot |E|^2$ edges in $E_2$. Hence, the
    total number of crossings between edges in $E_1$ and edges in $E_2$ is at
    most $\sum_{\{v,w\} \in E}(|f(v)-f(w)| -1) \cdot |E|^2 \leq (k-|E|) \cdot
    |E|^2$. Considering that the total number of crossings among edges in $E_1$
    is at most $(|E|^2-1)$, the overall number of crossings is at most
    $|E|^2(k-|E|)+(|E|^2-1)$, which is equal to $k'$.
\end{proof}

\begin{lemma}  \label{lem:H-implies-G}
    A valid embedding of $H$ implies a valid linear arrangement of $G$.
\end{lemma}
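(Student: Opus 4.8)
The plan is to invert the argument of \lemref{G-implies-H}: starting from an embedding $\Gamma$ of $H$ with at most $k'$ crossings, I would recover a linear arrangement $f\colon V \to \{1,\dots,|V|\}$ and show that $\sum_{\{v,w\}\in E}|f(v)-f(w)| \le k$. The reduction is engineered so that the crossings $\Gamma$ is \emph{forced} to pay equal, up to a small additive slack, exactly $|E|^2$ times the OLA cost of $f$, and the integrality of both quantities closes the remaining gap.

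First I would establish that each triangular hexagonal grid $T_{|E|}$ behaves as a \emph{rigid block}: in any drawing of $H$ using at most $k'$ crossings, every grid is drawn without crossings among its own $E_4$ edges, and the left-to-right order of its bottom vertices is fixed (up to a global reflection that can be normalized away). The grid is $3$-connected and planar, so its combinatorial embedding is essentially determined; the work is to argue, in the level-drawing setting with fixed heights, that any ``twist'' of a grid (as in \figref{thg}) or any reordering of its bottom row creates more crossings than $k'$ permits. This is where the large multiplicities matter — the $2\deg(v_i)|V|$ copies of each vertex together with the $E_3$ connector edges are designed so that every local distortion costs asymptotically more than $k'$, so a cheap drawing cannot afford to disturb a grid. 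I expect this rigidity step to be the main obstacle, since it is exactly the point flagged as informal in the construction; making it precise requires a genuine per-grid crossing lower bound rather than the hand-counting used in the forward direction.

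Given rigidity, I would read off the horizontal order of the $V$-grids as a permutation, which defines $f$. A symmetric argument using the $|E|^2$ parallel $E_2$ edges forces the order of the $V'$-grids to coincide with that of the $V$-grids: two misaligned $E_2$ bundles would force on the order of $|E|^4$ crossings, which the multiplicities in the construction are chosen to exceed $k'$. Consequently the phrase ``the grids strictly between $v_i$ and $w_j$'' is well defined, and there are exactly $|f(v_i)-f(w_j)|-1$ of them.

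Finally I would produce the crossing lower bound. Each $E_1$ edge $(v_i,w'_j)$ joins a $V$-grid to a $V'$-grid separated by $|f(v_i)-f(w_j)|-1$ intermediate grids; because the edges are $y$-monotone and the walls are rigid, it must cross every intervening $E_2$ wall, contributing at least $(|f(v_i)-f(w_j)|-1)\,|E|^2$ crossings. Summing over $E$ and discarding the remaining non-negative crossings, the crossing count of $\Gamma$ is at least $|E|^2\sum_{\{v,w\}\in E}\bigl(|f(v)-f(w)|-1\bigr) = |E|^2\sum_{\{v,w\}\in E}|f(v)-f(w)| - |E|^3$. Combining this with the hypothesis that the count is at most $k' = |E|^2(k-|E|)+(|E|^2-1)$ yields $\sum_{\{v,w\}\in E}|f(v)-f(w)| \le k + 1 - |E|^{-2} < k+1$; since the left-hand side is an integer, it is at most $k$, which is precisely a valid linear arrangement of $G$. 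The slack term $|E|^2-1$ in $k'$ absorbs the at-most-$(|E|^2-1)$ crossings among the $E_1$ edges themselves, matching \lemref{G-implies-H} and ensuring no over-counting.
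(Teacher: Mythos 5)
Your proposal follows essentially the same route as the paper's proof: read off a permutation from the horizontal order of the grids, use the $|E|^4$ crossings forced by misaligned $E_2$ bundles to show the top-row and bottom-row orders agree, lower-bound the $E_1$--$E_2$ crossings by $(|f(v)-f(w)|-1)\,|E|^2$ per edge of $E$, and divide through by $|E|^2$. You are in fact more careful than the paper on two points it leaves implicit --- the rigidity of the hexagonal grids (which the paper only flags in an internal note) and the integrality step needed to absorb the $|E|^2-1$ slack --- so the plan is sound, though the per-grid rigidity lower bound you correctly identify as the main obstacle is stated as a goal rather than actually proved, exactly as in the paper.
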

\begin{proof}
    Now, assume that an embedding of $H$ is provided. From this embedding,
    we can extract two distinct one-to-one functions, $f_1$ and $f_2$, mapping
    the vertices $V$ to the set~$\{1,2, \ldots ,|V|\}$ based on the order of the top
    and bottom rows of the corresponding hexagonal grids, respectively. Our goal
    is to prove that these two functions are identical. Consider two vertices
    $v$ and $w$ in $V$ such that $f_1(v) < f_1(w)$ and $f_2(v') > f_2(w')$. In
    this case, there are $|E|^2$ edges between $v$ and $v'$ in $E_2$ that
    intersect $|E|^2$ edges between $w$ and $w'$ in~$E_2$, resulting in a total
    of $|E|^4$ crossings. This contradicts the assumed bound on the number of
    crossings, which means that $f_1$ and $f_2$ must indeed be identical.
    Consequently, all vertices~$v$ in $V$ face their respective corresponding
    vertices $v'$ as depicted in
    \figref{ex-graph-structure-nphardproof}.
    Therefore, each edge $\{v_i,w'_j\}$ in $E_1$ crosses $(|f_1(v_i)-f_1(w_j)|
    -1) \cdot |E|^2$ edges of $E_2$. From these observations, we have:
    \begin{equation}
        \begin{split}
            \sum_{\{v,w\} \in E} &(|f_1(v)-f_1(w)| -1).|E|^2 \leq k' \\
                & \Rightarrow
                \sum_{\{v,w\} \in E}(|f_1(v)-f_1(w)| -1).|E|^2 \leq
                |E|^2(k-|E|)+(|E|^2-1)\\
                & \Rightarrow
                \sum_{\{v,w\} \in E}(|f_1(v)-f_1(w)| -1) \leq (k-|E|)\\
                & \Rightarrow
                \sum_{\{v,w\} \in E}|f_1(v)-f_1(w)| \leq k.
        \end{split}
    \end{equation}
\end{proof}

Now, we are ready to prove \thmref{NP-hardness} using the construction and the
as stated~above.
\begin{proof}[Proof of \thmref{NP-hardness}]
    Given an instance $G$ of OLA, we construct a graph $H$ as in
    \secref{construction}, by \lemref{G-implies-H}, a valid linear arrangement
    of $G$ implies a valid embedding of
    $H$, and by \lemref{H-implies-G}, the reverse is also true. Finally, note
    that $H$ has polynomial size \journal{fill in exact size?} and the
    construction of $H$ takes polynomial time.
\end{proof}

\section{Parameterized Complexity in the Number of Cycles}\label{sec:cycles}
The number of cycles in a Reeb graph for oriented surfaces is connected to its
genus. Therefore, it is of interest to investigate the
complexity of the drawing problem for generic Reeb graphs, parameterized by the
number of cycles in the graph.
In this section, we make first steps towards such an investigation by
considering several special graph classes with zero or one cycle.
Throughout this section, let $(G,h)$ be a Reeb graph; we seek a drawing that
minimizes the number of crossings.

\subsection{No Cycles}

Consider the case where $G$ has no cycles (i.e., $G$ is a tree).
If we did not have the constraint that each vertex has a prescribed height, then
$G$ is a planar graph and can hence be drawn without crossings.
If we instead consider the problem where edges are pointed in a particular
direction (upward or downward),
it is still possible to draw $G$ without crossings~\cite{DiBattista1998}.
However, once we require the drawing to respect~$h$, then it is possible that
drawing~$G$ requires crossings.  For example, see \figref{ex-tree-nonplanar}.

We start with a basic observation, which has been used in the literature
(e.g.,~\cite {EstrellaBalderrama2009OnTC}).
If~$G$ has no branching points (i.e., is a single path), then drawing~$(G,h)$
without crossings is always possible:

\begin{construction}[A Path]\label{construct:path}
    Let $(G,h)$ be a Reeb graph such that $G$ is a single path.  Then, label the
    vertices consecutively by the order they appear in the path:~$v_1,v_2,
    \ldots, v_n$.  We draw~$v_i$ at $(i,h(v_i))$, with straight-line edges
    connecting adjacent vertices on the path.
\end{construction}

    \begin{figure}[hbt]
        \begin{subfigure}[b]{0.45\textwidth}
            \centering
            \includegraphics[height=1in]{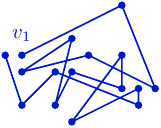}
            \caption{Before}
            \label{fig:ex-path-random}
        \end{subfigure}
        ~~
        \begin{subfigure}[b]{0.45\textwidth}
            \centering
            \includegraphics[height=1in]{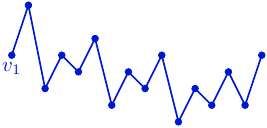}
            \caption{After}
            \label{fig:ex-path-lefttoright}
        \end{subfigure}
        \caption{Drawing the Reeb graph of a path.
            In \subonlyref{ex-path-random}, we see a drawing of the Reeb graph with
            seven crossings. We label one endpoint $v_1$, then draw
            consecutive vertices to the right to obtain the drawing without
            crossings shown in~\subonlyref{ex-path-lefttoright}.}
        \label{fig:ex-path}
    \end{figure}

See \figref{ex-path} for an illustration of Construction~\ref{construct:path}.
Because no two edges share an $x$-coordinate, other than their shared endpoint, we
see that this drawing has no crossings.
In fact, the same result also applies to {\em caterpillars}, see
\figref{ex-caterpillar-lefttoright}.

\begin{construction}[A Caterpillar]\label{construct:caterpillar}
    Let $(G,h)$ be a Reeb graph such that $G$ is a caterpillar. We apply
    Construction~\ref {construct:path} to the spine of $G$. Then, because each
    vertex $v$ of $G$ has $\deg(v) \le 3$, we can simply draw the legs straight up or down.
\end{construction}
\begin{figure}
    \centering
    \includegraphics[height=1in]{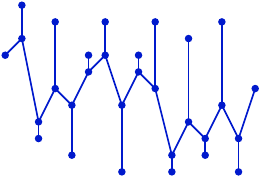}
    \caption{Every caterpillar can be embedded without crossings.}
    \label{fig:ex-caterpillar-lefttoright}
\end{figure}

Interestingly, because generic Reeb graphs have the property that all vertices
are degree one or three, this means all internal vertices of the spine have
exactly one leg; furthermore, when the spine is bending up the leg points down
and vice versa.
For more complicated trees (e.g., lobsters) it is no longer true that they can
always be drawn without crossings, and the problem of minimizing their crossings
remains open. Additionally, \figref{ex-tree-nonplanar} shows an example of a
planar graph---in particular, a tree---that cannot be drawn without
crossings while respecting the height function.

\begin{figure}[htb]
   \centering
    \includegraphics{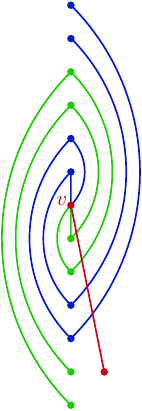}
    \caption{The red vertex $v$ has $\deg(v)=5$. We consider three subtrees rooted at $v$: the green one, the
    blue one, and the red one.
    In order to draw the blue and green subtrees without overlapping,
    they must `spiral' around the red vertex. However, by doing so, it is not
   possible to draw the red edge to be monotonically decreasing without
    crossing the blue subtree.}\label{fig:ex-tree-nonplanar}
\end{figure}

\subsection {One Cycle}

We next consider drawing $(G,h)$ when $G$ is a cycle (and nothing more).
Intuitively, we approach this type of graph by specifying the number of times
such a cycle must alternate up and down edges, as these are the potential
crossings.  We make this more precise as~follows; see
\figref{alg-onecycle-topdowniterationnumber} for an example.

\begin {definition} [Top-Down Iteration Number]\label{def:topdown}
      Let~$(G,h)$ be a Reeb graph and let $C$ be a cycle in $G$.
      Let $t = \max_{v \in C} h(v)$ be the {\em top} level used by $C$, and
      let~$b = \min_{v \in C} h(v)$ be the bottom level.
      Now, number the vertices of $C$ in order of the cycle, starting with some vertex $v_1$ such that $h(v_1) = t$.
      Set $t_1 = 1$, and then set~$b_i = \min_{j > t_i} \{ h(j)=b \}$
      and~$t_i = \min_{j > t_{i-1}} \{ h(j)=t \}$
      be the first alternating indices where the cycle uses the top and bottom level respectively.
      Let~$k$ be the total number of iterations.
      We say that $k$ is the {\em top-down-iteration-number} of~$C$.
\end{definition}

\begin{figure}[tbh]
    \centering
    \includegraphics[height=1in]{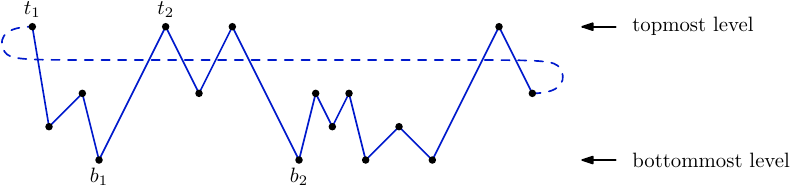}
    \caption{An example of a cycle with top-down-iteration-number equal to $2$.}
    \label{fig:alg-onecycle-topdowniterationnumber}
\end{figure}

We next state the main result we set out to prove in this section:

\begin{theorem} [Cycle Drawing] \label{thm:cycle-drawing}
    Let $G$ be a cycle, and let $k$ be the top-down iteration number of $G$.
      Then, there exists a drawing of $G$ with $k-1$ crossings.
\end{theorem}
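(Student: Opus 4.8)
The plan is to prove the upper bound constructively, producing a drawing with exactly $k-1$ crossings, and I would organize the construction as an induction on the top-down iteration number $k$. The starting point is a decomposition of the cycle: cutting $C$ at the alternating top/bottom indices $t_1,b_1,\dots,t_k,b_k$ of Definition~\ref{def:topdown} breaks it into $2k$ monotone passages, grouped into $k$ \emph{oscillations}, where the $i$-th oscillation descends from the $i$-th visit to the top level to the $i$-th visit to the bottom level and then ascends to the next top visit. Because these oscillations generally reach to different heights, the idea is to draw them \emph{nested}, like the windings of a spiral, and to close the cycle by a single arc that is forced to cross each enclosed winding.

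For the base case $k=1$ the cycle has one top visit and one bottom visit and is drawn as a lens of two $y$-monotone arcs, with $0=k-1$ crossings. For the inductive step I would assume a cycle of top-down iteration number $k-1$ is drawn with $k-2$ crossings so that its outer face is ``accessible'', and then attach one more oscillation around the current drawing. The new descending and ascending passages are routed on the outside of everything drawn so far, so that they introduce no crossings among themselves or with the old drawing; the single edge that re-closes the cycle then has endpoints interleaved with exactly the windings it must pass, and can be drawn to cross the existing drawing exactly once. Summing the per-step increments yields $k-1$ crossings in total.

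The main obstacle is guaranteeing that each added oscillation costs \emph{exactly} one crossing. A naive side-by-side layout, placing the oscillations left to right and closing the cycle with one wide valley, forces the closing passage to cross two edges of every enclosed oscillation and so produces on the order of $2(k-1)$ crossings; I checked small cases and such layouts are strictly suboptimal, so the sharp bound genuinely requires the nesting to be arranged so the closure meets each winding only once. Making this rigorous is the heart of the proof: I expect to maintain an invariant describing the left-to-right sequence of vertices currently exposed on the outer boundary and to argue that attaching an oscillation preserves this invariant while still reaching the required extreme heights, exploiting the differing vertical extents of the oscillations---a purely two-level version, with all extrema at a common top and bottom height, appears to require strictly more than $k-1$ crossings. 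The supporting topological fact is that two $y$-monotone passages whose endpoints interleave must cross an odd number of times, which makes one crossing per added winding both achievable and unavoidable in the nested scheme. Finally I would handle the bookkeeping of Definition~\ref{def:topdown}---the choice of the starting top vertex $v_1$ and of the passage along which the cycle is closed---so that the induction applies to every cycle; see \figref{alg-onecycle-topdowniterationnumber} for the smallest nontrivial case.
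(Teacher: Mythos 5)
There is a genuine gap, and it stems from a false premise at the heart of your plan. You write that ``a purely two-level version, with all extrema at a common top and bottom height, appears to require strictly more than $k-1$ crossings,'' and you use this to justify a nested-spiral construction that exploits ``differing vertical extents of the oscillations.'' But by Definition~\ref{def:topdown} the key vertices $t_1,\dots,t_k$ all sit at the \emph{global} maximum level and $b_1,\dots,b_k$ all sit at the global minimum level, so there are no differing vertical extents to exploit at the level of the extrema; and the two-level alternating cycle on $2k$ vertices is itself a legitimate instance of the theorem with top-down iteration number $k$, so if it required more than $k-1$ crossings the theorem would simply be false. It does not: the $4$-cycle $t_1b_1t_2b_2$ ($k=2$) is $K_{2,2}$ on two levels and is drawable with exactly one crossing, and in general the bowtie layout of Construction~\ref{construct:bowtie} draws the $2k$-vertex alternating cycle with $k-1$ crossings (Lemma~\ref{lem:bowtiebest} shows this is optimal). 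The paper's proof runs in exactly the opposite direction from yours: it \emph{reduces} the general cycle to this two-level skeleton on the key vertices, draws the skeleton as a bowtie, and then expands each skeleton edge back into the path it represents, showing that non-crossing skeleton edges become crossing-free paths (Type~1, via the path construction) and each crossing pair of skeleton edges becomes a pair of paths realizable with a single crossing (Type~2, Lemma~\ref{lem:type2}). So the case you are trying to route around is precisely the case the correct argument is built on.

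A secondary problem is that your induction is not well-founded as stated: deleting an oscillation $t_i \to b_i \to t_{i+1}$ from the cycle leaves a path whose two loose ends are both at the top level, and these cannot be rejoined by a $y$-monotone edge, so there is no smaller cycle of top-down iteration number $k-1$ to which the inductive hypothesis applies. One would have to induct on a different object (e.g., the two-level skeleton with a marked set of edges to be expanded), at which point the argument collapses back into the bowtie-plus-subproblems decomposition. Your observation that interleaved $y$-monotone passages must cross an odd number of times is sound and is morally the reason one crossing per ``wing'' is achievable, but it cannot rescue the construction until the two-level skeleton is handled correctly rather than declared impossible.
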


Before proving the theorem, we first build some intuition.
When drawing a cycle, the main factor that influences the number of crossings is
the number of vertices at the highest or lowest levels.
We first consider two special cases: the case where there is a single
global maximum and a single global minimum, as in
\figref{onecycle-uniquetopbot}, and the case where {\em every} vertex is
a global maximum or minimum, as in \figref{ex-onecycle-bowtie}.

\begin{figure}[tbh]
\begin{subfigure}[b]{0.5\textwidth}
    \centering
    \includegraphics[height=1in]{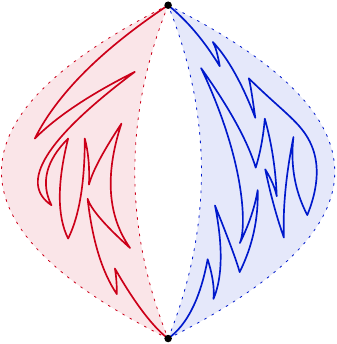}
    \caption{}
    \label{fig:onecycle-uniquetopbot}
\end{subfigure}
\begin{subfigure}[b]{0.5\textwidth}
    \centering
        \includegraphics[height=.7in]{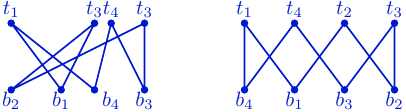}
        \caption{}\label{fig:ex-onecycle-bowtie}
\end{subfigure}
    \caption{
        (a) When there is a unique topmost and bottommost vertex, a cycle can
        always be drawn without crossings. (b) Drawing a cycle that alternates
        between two levels. (Left) A graph with the vertices labeled as
        described in \constructref{bowtie}. (Right) The bowtie layout.
    }
    \label{fig:enter-label}
\end{figure}

\begin {lemma}
    When $G$ is a cycle and there is a unique topmost and a unique bottommost
    vertex, then $G$ can be drawn without crossings.
\end{lemma}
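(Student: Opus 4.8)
The plan is to exploit the fact that a unique global maximum $t$ and a unique global minimum $b$ split the cycle $G$ into exactly two internally disjoint arcs $A$ and $B$, each a path from $t$ to $b$ whose internal vertices all have heights strictly between $h(b)$ and $h(t)$. I would draw one arc entirely in the closed left half-plane and the other entirely in the closed right half-plane, placing $t$ and $b$ on the dividing vertical line $x=0$. If each arc can be realized as a simple $y$-monotone-edged curve inside its half-plane that meets the axis only at $t$ and $b$, then the two arcs intersect only at their shared endpoints, so their union is a simple closed curve and the drawing has no crossings.

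The main obstacle is the per-arc drawing: an arc from $t$ to $b$ need not be height-monotone, since it may contain internal local maxima and minima (all strictly below $t$ and above $b$), so I cannot simply route it monotonically down one side of a strip. To handle this I would first invoke Construction~\ref{construct:path}: placing the vertices $p_0=t, p_1, \dots, p_r=b$ of the arc at $(0,h(p_0)), (1,h(p_1)), \dots, (r,h(p_r))$ yields a crossing-free drawing, because consecutive edges have interior-disjoint $x$-projections. This drawing is a simple arc, but its endpoints sit at $x=0$ (top) and $x=r$ (bottom) rather than both on the axis.

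To move both endpoints onto $x=0$ while staying inside a half-plane, I would apply a height-preserving shear. Let $R(y)$ be the maximum $x$-coordinate attained by the drawn arc at height $y$; this is continuous and piecewise linear, and because $t$ and $b$ are the \emph{unique} extreme vertices of the arc, $R(h(t))$ and $R(h(b))$ are attained only at $t$ and $b$. The map $\phi(x,y) = (x - R(y) - g(y),\, y)$, where $g \ge 0$ vanishes exactly at $y \in \{h(t), h(b)\}$ and is small and positive elsewhere, is a homeomorphism of the plane: it preserves heights (so edges stay $y$-monotone and vertices keep their prescribed $y$-coordinates) and preserves the left-to-right order at every height (so it introduces no crossings). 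Its image lies in $x \le 0$, the endpoints $t$ and $b$ land on the axis, and every internal vertex is pushed strictly to the left of it. Reflecting the construction produces arc $B$ inside the right half-plane.

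Finally I would assemble the pieces: place $t$ at $(0,h(t))$ and $b$ at $(0,h(b))$, draw $A$ by the sheared construction in $x \le 0$ and $B$ by the mirrored construction in $x \ge 0$. Since only $t$ and $b$ lie on the axis for either arc, $A$ and $B$ meet exactly at $\{t,b\}$, and the whole cycle is drawn without crossings; this is also consistent with \thmref{cycle-drawing}, as a unique top and bottom give top-down iteration number $1$ and hence $k-1=0$ crossings. I expect the shear step to be the delicate part, since one must verify that $R(y)$ is continuous and that no internal vertex is accidentally pinned to the axis, which is precisely what the perturbation $g(y)$ guarantees.
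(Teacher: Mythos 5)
Your approach is essentially the paper's: its entire proof is to apply Construction~\ref{construct:path} to each of the two $t$--$b$ arcs and lay the two resulting path drawings out disjointly on either ``side,'' which is exactly your decomposition, with your shear step supplying details the paper omits. The only quibble is your claim that $R(y)$ (the rightmost $x$-coordinate of the drawn arc at height $y$) is continuous --- it can jump at a local extremum of the arc --- but replacing it by any continuous majorant that agrees with it at $h(t)$ and $h(b)$ repairs the shear, so the argument goes through.
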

\begin {proof}
    We apply Construction~\ref{construct:path} twice, laying each path out disjointly to one
    ``side'' of the drawing.
\end {proof}

We next consider what we call a bowtie graph, where every vertex is a global
maximum or minimum; see \figref{ex-onecycle-bowtie} for an illustration.

\begin{construction}[Bowtie]\label{construct:bowtie}
    If $G$ is a cycle with $2n$ vertices alternating between two heights, then the
    \emph{bowtie construction} is created as follows:
    \begin{enumerate}
        \item Choose an arbitrary vertex on the top level to start; call it
            $t_1=v_1$.
        \item Label the next vertex (at the bottom level) $b_1=v_2$.
        \item Repeat, labeling $t_2=v_3, b_2=v_4, t_3=v_5, b_3=v_6, \ldots,
            t_n=v_{2n-1}, b_n=v_2n$.
        \item For $i=\{1,2,\ldots n\}$, draw $v_i$ at $(i,h(v_i))$.
        \item For $i=\{n+1,n+2, \ldots 2n\}$, draw $v_i$ at $(i-n,h(v_i))$.
        \item Connect $t_1$ and $b_n$ with a vertical straight line.
        \item For all $i\in \{2,3\ldots 2n\}$, connect $v_{i-1}$ to $v_i$ with a
            straight line.
    \end{enumerate}
\end{construction}

We next state two technical lemmas regarding the bowtie construction.

\begin{lemma}[Bowtie Construction is Optimal]\label{lem:bowtiebest}
    Let $G$ be a graph that is a single cycle, with vertices alternating between
    two heights.  Then, the bowtie construction provides a drawing that
    minimizes the number of crossings.
\end{lemma}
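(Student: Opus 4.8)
The plan is to recast the problem as a two-layer (bipartite) crossing-minimization and then prove a matching lower bound by decomposing the cycle into two perfect matchings. First I would observe that, since $G$ is a single cycle whose $2n$ vertices lie on exactly two levels ($n$ on top, $n$ on bottom), every edge is a $y$-monotone curve spanning the single horizontal strip between the two levels. Two such edges must cross at least once precisely when their endpoints interleave in the left-to-right orders of the top and bottom vertices, and non-interleaving edges can be routed disjointly. Hence the number of crossings is governed entirely by the two linear orders of the top and bottom vertices (so curved versus straight edges is irrelevant here), and the minimum crossing number equals the minimum number of interleaving edge pairs over all such orders.

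For the upper bound I would either count directly in the bowtie layout of \constructref{bowtie}, which yields $n-1$ crossings, or invoke \thmref{cycle-drawing}: the alternating cycle has top-down iteration number $n$, so a drawing with $n-1$ crossings exists. It then remains to prove that every drawing has at least $n-1$ crossings; together these show \constructref{bowtie} attains the minimum.

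For the lower bound, fix any drawing and label the top and bottom vertices $1,\dots,n$ by their left-to-right positions. Traversing the cycle, the edges alternate between ``down'' edges (oriented top-to-bottom) and ``up'' edges; the $n$ down edges form a perfect matching $M_D$ between tops and bottoms, as do the $n$ up edges ($M_U$), and $M_D\cup M_U$ is the whole cycle. Encode $M_D,M_U$ as permutations $\sigma_D,\sigma_U$ of $\{1,\dots,n\}$ sending a top's position to the position of its matched bottom. Two down edges cross iff their positions form an inversion of $\sigma_D$, so the number of down--down crossings is at least $\mathrm{inv}(\sigma_D)$; symmetrically the up--up crossings are at least $\mathrm{inv}(\sigma_U)$. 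Since these pairs are disjoint and down--up crossings are nonnegative, the total is at least $\mathrm{inv}(\sigma_D)+\mathrm{inv}(\sigma_U)$.

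The key step is to bound this sum below by $n-1$. Because inversions are the Coxeter length on $S_n$ with respect to adjacent transpositions, length is subadditive, so $\mathrm{inv}(\sigma_D)+\mathrm{inv}(\sigma_U)=\mathrm{inv}(\sigma_D)+\mathrm{inv}(\sigma_U^{-1})\ge \mathrm{inv}(\sigma_D\sigma_U^{-1})$. The crucial structural fact is that $G$ is a \emph{single} cycle: $M_D\cup M_U$ is connected exactly when the ``next-top'' permutation $\sigma_U^{-1}\sigma_D$ is a single $n$-cycle, and $\sigma_D\sigma_U^{-1}$ is conjugate to it and hence also a single $n$-cycle. A single $n$-cycle needs at least $n-1$ transpositions and therefore at least $n-1$ adjacent transpositions, so $\mathrm{inv}(\sigma_D\sigma_U^{-1})\ge n-1$. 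Chaining the inequalities gives at least $n-1$ crossings, matching the upper bound. I expect the main obstacle to be the bookkeeping of this last paragraph: correctly extracting $\sigma_D,\sigma_U$ from an arbitrary drawing, verifying that ``$G$ is a single cycle'' translates precisely to ``$\sigma_U^{-1}\sigma_D$ is an $n$-cycle,'' and confirming that the down--down and up--up crossing counts really are lower-bounded by the corresponding inversion numbers even for curved edges.
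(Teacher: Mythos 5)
Your proof is correct, but it takes a genuinely different route from the paper. The paper's argument is structural: it observes that the bowtie layout leaves exactly two edges uncrossed and every other edge crossed once, and then shows by a case analysis on the two paths emanating from an uncrossed edge's endpoint that no drawing can have \emph{three} pairwise-uncrossed edges (any such configuration forces either a crossing with the middle uncrossed edge or a proper subcycle avoiding the rightmost one, contradicting that $G$ is a single cycle). Since all but at most two of the $2n$ edges must then carry a crossing, at least $(2n-2)/2=n-1$ crossings are needed. Your argument reaches the same bound $n-1$ quantitatively, via the decomposition into the two perfect matchings $M_D,M_U$, the identification of crossings with inversions, and the Coxeter-length inequality $\mathrm{inv}(\sigma_D)+\mathrm{inv}(\sigma_U)\ge \ell(\sigma_D\sigma_U^{-1})\ge n-1$ for the $n$-cycle $\sigma_D\sigma_U^{-1}$; all the steps you flag as needing care (extracting the permutations, the equivalence between connectivity of $M_D\cup M_U$ and $\sigma_U^{-1}\sigma_D$ being an $n$-cycle, and the inversion lower bound for curved $y$-monotone edges) do go through. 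What your approach buys is robustness: it discards the down--up crossings entirely and still gets a tight bound, and it generalizes immediately (a union of two matchings whose union has $c$ components needs at least $n-c$ crossings), whereas the paper's approach is more elementary and self-contained. One discrepancy worth noting: the paper states the bowtie layout has $(n-4)/2$ crossings with $n$ the total number of vertices, which is inconsistent with its own claim that $n-2$ edges each carry one crossing and with the true optimum; your count of $n-1$ crossings per $n$ vertices on each level, i.e.\ $(|V|-2)/2$, is the correct one (the $4$-cycle and $6$-cycle on two levels require $1$ and $2$ crossings, respectively).
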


\begin{proof}

    Suppose $G$ has $n$ vertices. Because $G$ is a simple cycle, it also has $n$
    edges. By construction, each edge connects a vertex in the first layer with
    a vertex in the second layer.
    Using the bowtie construction, we draw $G$ with $(n-4)/2$ crossings.  In
    particular, two edges have no crossings and $n-2$ edges have one crossing.
    Note that this is the least number of crossings that we could have if there
    are only one or two edges with no crossings.
    So, to draw~$G$ with fewer crossings, we must have at least
    three edges with no crossings.

    \begin{figure}[htb]
        \centering
        \includegraphics{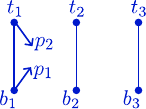}
        \caption{The three uncrossed edges in the proof of \lemref{bowtiebest}.
        The three vertices~$b_1$, $b_2$, and~$b_3$ are at the first height,
        and the three vertices $t_1$, $t_2$, and $t_3$ are at the second height.
        Note that because these three edges are between the same two heights and
        the edges have no crossings, we can think of these three edges with
        $(b_1,t_1)$ the leftmost and $(b_3,t_3)$ the rightmost.  All vertices
        have degree two (because this graph is a cycle).
        }\label{fig:bowtieproof}
    \end{figure}
    Suppose, for the purpose of contradiction, $G$ has a drawing $D$ such that
    there are three edges with no crossing.
    From leftmost in $D$ to rightmost in $D$, let the edges be $(b_1,t_1)$,
    $(b_2,t_2)$, $(b_3,t_3)$, where the first vertex is in the first layer and
    the second vertex in the second layer for each pair, as sketched in
    \figref{bowtieproof}.
    Because~$G$ is a cycle, each vertex has degree
    two. In particular, we think of $t_1$ as having
    two paths leaving it.  Let $p_1$ be the path that visits~$b_1$ next, and let
    $p_2$ be the other path. Note that both paths eventually return to $t_1$,
    after visiting every other vertex exactly once.  We consider the four
    options for where $p_1$ visits next, of the six labeled vertices:
    \begin{itemize}
        \item If $p_2$ visits $b_3$ or $t_3$ before visiting $b_2$ or $t_2$,
            then $p_2$ must cross the edge $(b_2,t_2)$, which is a
            contradiction.  Similarly, $p_1$ must visit $b_2$ or $t_2$ before
            $b_3$ or $t_3$.
        \item Suppose $p_2$ visits $t_2$ before any of the other $b_i$ or $t_i$.
            However, we already established that~$p_1$ must visit $b_1$ or $t_1$
            before $b_3$ or $t_3$.  Because $\deg(b_1) = 2$, we
            know that~$p_2$ must end at $t_1$,  which means that
            the path starting at $b_1$, going up to $t_1$, following~$p_2$ to
            $t_2$, going down the edge to $b_2$, and following $p_1$ backwards
            to $b_1$
            is a cycle
            in $G$ that
            does not contain vertices $t_3$ and $b_3$, which is a contradiction.
        \item If $p_2$ visits $b_2$ before any of the other $b_i$ or $t_i$, then
           we follow a similar logic to find a cycle that does not include $b_3$
            nor $t_3$: starting at $b_1$, going
            up to $t_1$, following $p_2$ to $b_2$, going up the edge to $t_2$,
            and following $p_1$ back to $b_1$, which again is a contradiction.
    \end{itemize}
    Thus, following all cases for where $p_2$ goes leads to a contradiction.
\end{proof}

Now, we sketch the global approach for the general case where $G$ is a cycle.
First, we identify critical vertices at the top and bottom levels
(following Definition~\ref{def:topdown}). These critical vertices globally form
a bowtie-like graph, which we draw using the bowtie technique. But then, the
connections between critical vertices are not single edges but paths, which
means we still need to draw them.

There are two types of subproblems; see \figref{alg-cycle-subproblems}.
The subproblems of Type 1 are single paths that do not interfere with any other
paths, so we can use~\constructref{path} to draw them without crossings. But the
subproblems of Type 2 involve two crossing paths. We now show how to draw any
such pair of crossing paths with only a single crossing.

\begin{figure}[tbh]
    \centering
    \includegraphics[height=1.6in]{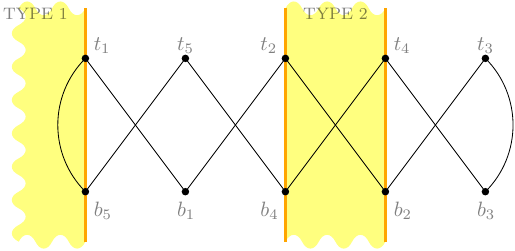}
    \caption{When drawing a global bowtie, there are two types of subproblems to
        consider, as described in the proof of Theorem~\ref{thm:cycle-drawing}.}
    \label{fig:alg-cycle-subproblems}
\end{figure}

\begin{lemma} \label{lem:type2}
      Every Type 2 subproblem has a solution with exactly one crossing.
\end{lemma}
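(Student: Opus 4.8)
The plan is to achieve exactly one crossing by matching an easy lower bound with an explicit construction. First I would make the two paths of a Type~2 subproblem precise (see \figref{alg-cycle-subproblems}): each is a sub-path of the cycle joining a top critical vertex to a bottom critical vertex, and the global bowtie places their four endpoints so that the two top endpoints occur in the opposite left-to-right order from the two bottom endpoints. This interleaving already forces at least one crossing: extending one path by horizontal rays at its endpoints to a bi-infinite simple curve that separates the plane into an upper region $R^+$ and a lower region $R^-$, the two endpoints of the other path lie on opposite sides, so by the Jordan curve theorem the paths meet an odd number of times. Thus one crossing is optimal, and it remains to realize it.

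For the construction, I would draw the first path $P$ on its own, crossing-free and $x$-monotone, via \constructref{path}. Because $P$ is $x$-monotone, the extended curve $\widehat P$ is the graph of a function of $x$, and its two sides $R^+$ and $R^-$ are the corresponding epigraph and hypograph; in particular each side is simply connected, which lets me route those edges of $Q$ that lie within a single side as $y$-monotone arcs without leaving it, using the freedom to choose the $x$-coordinates of the internal vertices of $Q$. The second path $Q$ must run from $R^+$ to $R^-$. The key idea is to assign each vertex of $Q$ to one region so that the side sequence along $Q$ switches exactly once: the vertices before a chosen index go to the region of $Q$'s start, the rest to the region of $Q$'s end. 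Since heights are fixed but $x$-coordinates are free, every internal vertex of $Q$ can be placed at its height arbitrarily far into whichever side we choose, so every edge of $Q$ except the single \emph{switch} edge stays within one side and is disjoint from $P$; the switch edge is drawn as a short, nearly horizontal $y$-monotone segment piercing $\widehat P$ transversally exactly once.

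The main obstacle is ensuring that the switch edge crosses $P$ only once and that no non-switch edge accidentally re-crosses it. I would address this by performing the switch at a height where a short segment from just inside $R^+$ to just inside $R^-$ meets $\widehat P$ in a single point --- such a height exists because $\widehat P$ is a graph over $x$, so locally it is one strand --- and by routing the remaining edges of $Q$ to hug $\widehat P$ on the correct side, using the $x$-coordinate freedom to keep each inside a thin band of its region. Since $Q$ is itself a path, \constructref{path} guarantees it has no self-crossings once its vertices are placed, so the only crossing in the whole drawing is the one at the switch edge. Together with the lower bound this gives exactly one crossing, as claimed, and this one-crossing gadget is precisely what feeds the per-iteration count in \thmref{cycle-drawing}.
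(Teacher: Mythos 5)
Your high-level plan---force one crossing by a parity argument, then realize it by designating a single pair of crossing edges and routing everything else in crossing-free regions---is the same strategy as the paper's, and your Jordan-curve lower bound is a genuine addition (the paper only constructs the one-crossing drawing and never argues that zero is impossible). The realization differs: the paper fixes the crossing to be between the edge of $r$ entering $b_2$ and the edge of $g$ leaving $b_1$ and then routes the four remaining sub-paths in four disjoint regions (see Figure~\ref{fig:alg-cycle-type2}), whereas you draw one path $P$ in full as a separator $\widehat P$ and cross it with one ``switch'' edge of $Q$ chosen somewhere along $Q$.

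The routing of $Q$ is where your argument has a real gap. The two sides of $\widehat P$ are the epigraph and hypograph of a function of $x$, but that function is \emph{not} monotone: the internal vertices of $P$ have prescribed heights that may oscillate, so at a fixed height $y\in(b,t)$ the hypograph can split into several ``lobes,'' and two points of $R^-$ need not be joinable by a $y$-monotone arc inside $R^-$ (you would have to dip under a low vertex of $P$ and come back up, violating monotonicity). Consequently, ``hugging $\widehat P$ in a thin band'' is not available to you, and the claim that each non-switch edge of $Q$ stays on one side fails in general if the switch point sits in the interior of $P$'s span; the same problem afflicts the edges of $Q$ incident to the fixed corners $b_1$ and $t_2$, which do not enjoy the ``place the other endpoint arbitrarily far into its side'' freedom. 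The repair is essentially the paper's choice: localize the crossing at an extreme level, where only the corner-incident edges exist---cross the edge of $g$ leaving $b_1$ with the edge of $r$ entering $b_2$ just above height $b$---and then each remaining sub-path lives in a quarter-plane where the separator is flat and Construction~\ref{construct:path} applies verbatim. To be fair, the paper's own proof also delegates the region-routing claims to a figure (and says so), but your ``thin band'' step as written is the one step that would actually break.
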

\begin{proof}

    Consider a Type 2 subproblem; that is, four vertices $t_1$, $b_1$, $t_2$,
    and $b_2$ with $h(t_1) = h(t_2) = t$ and $h(b_1) = h(b_2) = b$ and two
    paths, $r$ from $t_1$ to $b_2$ and $g$ from $b_1$ to $t_2$ such that the
    heights of all vertices $v$ of $r$ or $g$ satisfy $b < h(v) < t$.
    To prove the lemma, we need to draw $r$ and $g$ with just a single crossing.

    Identify $r_-$, $r_+$, $g_-$, and $g_+$ as follows:
    \begin{itemize}
      \item $r_-$ is the last vertex of $r$ with height $b$;
      \item $r_+$ is the next point on $r$ next to $r_-$;
      \item $g_-$ is the first vertex of $g$ with height $g$; and
      \item $g_+$ is the previous point on $g$ just before $g_-$
    \end{itemize}
    See \figref {alg-cycle-type2} (a) for an illustration of these vertices.

    We construct a drawing in which the edges $r_-r_+$ and $g_-g_+$ cross, but nothing else does.
    To achieve this, we
    identify four disjoint regions in which to draw the four remaining paths:
    $R^-$ for the part of $r$ up to $r^-$, $R^+$ for the part of $r$ after
    $r^+$, and similar for $G^-$ and~$G^+$. See
    \figref {alg-cycle-type2} (b).

    Now observe that:

    \begin{itemize}
      \item The part of $r$ from $b_1$ to $r_-$ can be drawn without crossings in region $R^-$
      \item The part of $g$ from $t_1$ to $g_+$  can be drawn without crossings in region $G^+$
      \item The part of $r$ from $r^+$ to $t_2$  can be drawn without crossings in region $R^+$
      \item The part of $g$ from $g_-$ to $b_2$ can be drawn without crossings
          in region $G^-$;
    \end{itemize}
    see \figref {alg-cycle-type2} (b, c).
    \journal{\maarten {we might want to put some more details to argue this last
    "observation" in a full version, but I think for now the figure is
    acceptable.}}
\end{proof}

We now have all the ingredients to prove Theorem \ref{thm:cycle-drawing}.

\begin{proof} [Proof of Theorem \ref{thm:cycle-drawing}]

    Given a cycle $G$ and associated map $h$, we first identify its
    top-down-iteration number $k$ and the corresponding key vertices~$t_1,
    \ldots, t_k$ and $b_1, \ldots, b_k$.
    Then, we first layout the subcycle of $G$ consisting only of three key
    vertices using the bow-tie construction. Then we add in the paths between
    the key vertices according to the two types of subproblems:

    To solve subproblems of Type 1, we use \constructref{path}.

    To solve subproblems of Type 2, we use \lemref{type2}.

    \begin{figure}[tbh]
        \centering
        \includegraphics[height=1.3in]{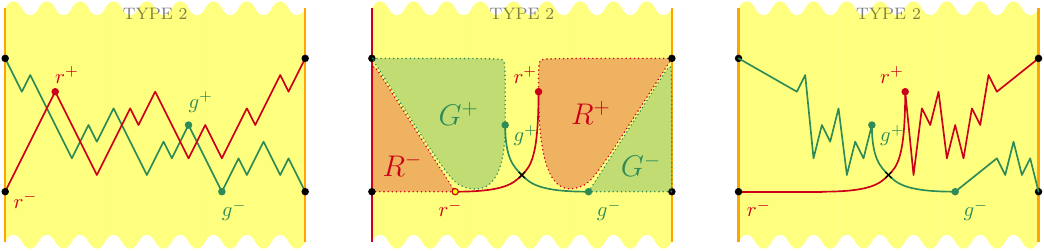}
        \caption{A zoom in of the Type 2 subproblem from the proof of Lemma~\ref{lem:type2}.}
        \label{fig:alg-cycle-type2}
    \end{figure}
\end{proof}

\begin{remark}[One Cycle with Tassels]
    Going beyond drawing single cycle Reeb graphs presents many challenges.
    Even when $G$ has one cycle, but with {\em tassels}---bundles of edges that
    connect one vertex of the cycle to vertices of degree one---we were
    hopeful that a dynamic programming approach would work. The best hope would be
    if we could choose some leftmost edge (perhaps arbitrarily, perhaps try all
    $|E|$, then the remaining vertices can be split into an ``upper portion'' and a
    ``lower portion'' so that we have to calculate: (1) where the
    transition from the ``upper'' and ``lower'' portions is, of which there are $n-1$
    choices; (2) which $x$-order to place the vertices of the upper and lower
    portions, with the hope that a greedy approach works.
    Unfortunately, we have an example that would not be splittable into two such
    portions; see \figref{ex-onecycle-backtracking}, where our tassels would be in place of the green edges.
\end{remark}
\begin{figure}[htb]
    \centering
    \includegraphics{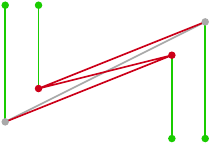}
    \caption{An example that requires backtracking.}\label{fig:ex-onecycle-backtracking}
\end{figure}

\journal{
\section {On Extending the Results}\label{sec:conjectures}
\input{conjectures}
}

\section{Conclusion and Discussion}\label{sec:conclusion}
We investigated the novel problem of drawing Reeb graphs. We established the
computational complexity of minimizing crossing numbers, proving the problem is
NP-hard. Notably, we characterized classes of acyclic Reeb graphs, including
paths and caterpillars, that can be drawn without crossings. For cyclic Reeb
graphs, we derived conditions for crossing-free layouts, particularly for cycles
with unique extrema. We also demonstrated the optimality of the \emph{bowtie}
configuration for cycles with alternating level vertices. Finally, we concluded
by showing that minimizing crossing numbers for general cyclic Reeb graphs is
computationally~feasible.

Several avenues for future research remain. One promising direction involves
developing algorithms for drawing Reeb graphs with arbitrary tree structures,
extending beyond the acyclic cases explored here. The complexity of drawing Reeb
graphs of surfaces with higher genus seems significantly more complex; even a
cycle with extra edges is nontrivial, as described in
Section~\ref{sec:cycles}.

\bibliographystyle{plainurl}
\bibliography{refs}


\end{document}